\newcommand{\dd}{\,{\rm d}}
\newcommand{\Tr}{\,{\rm Tr}}
\newcommand{\e}{\epsilon}
\renewcommand{\d}{\delta}
\newcommand{\la}{\lambda}
\renewcommand{\a}{\alpha}
\newcommand{\g}{\gamma}
\newcommand{\s}{\sigma}
\newcommand{\p}{\partial}
\newcommand\R{{\mathbb{R}}}
\newcommand\C{{\mathcal{C}}}
\renewcommand\H{{\mathbb{H}}}
\newcommand\Hh{{\mathcal{H}}}
\newcommand\E{{\mathbb{E}}}
\newtheorem{theorem}{Theorem}[section]
\newtheorem{proposition}[theorem]{Proposition}
\theoremstyle{definition}
\theoremstyle{remark}
\newtheorem{remark}[theorem]{Remark}
\numberwithin{equation}{section}
\begin{document}

\title[A Mean Field Game of Portfolio Trading]
{A Mean Field Game of Portfolio Trading And Its Consequences On Perceived Correlations}

\author{Charles-Albert Lehalle}
\address{C.-A. Lehalle: Capital Fund Management and Imperial College London.}
\email{c.lehalle{@}imperial.ac.uk}
 
\author{Charafeddine Mouzouni}
\address{C. Mouzouni: Univ Lyon, \'Ecole centrale de Lyon, CNRS UMR 5208, Institut Camille Jordan, 36 avenue Guy de Collonge, F-69134 Ecully Cedex, France.}
\thanks{ \emph{Acknowledgement.} C. Mouzouni was supported by LABEX MI- LYON (ANR-10-LABX-0070) of Universit\'e de Lyon, within the program ``Investissements d'Avenir" (ANR-11-IDEX-0007) operated by the French National Research Agency (ANR), and partially supported by project (ANR-16-CE40-0015-01) on Mean Field Games.
Authors would like to thank Pierre Cardaliaguet for a careful reading of large parts of this paper, and Jean-Philippe Bouchaud to have insisted on the fact that results of \cite{cardaliaguet2016mfgcontrols} should be observable using intraday data.}
\email{mouzouni{@}math.univ-lyon1.fr}

\dedicatory{Version: \today}
\subjclass[2010]{91G80}


\begin{abstract}
  This paper goes beyond the optimal trading Mean Field Game model introduced by Pierre Cardaliaguet and Charles-Albert Lehalle in \cite{cardaliaguet2016mfgcontrols}. It starts by extending it to portfolios of correlated instruments. This leads to several original contributions: first that hedging strategies naturally stem from optimal liquidation schemes on portfolios. Second we show the influence of trading flows on naive estimates of intraday volatility and correlations. Focussing on this important relation, we exhibit a closed form formula expressing standard estimates of correlations as a function of the underlying correlations and the initial imbalance of large orders, via the optimal flows of our mean field game between traders. To support our theoretical findings, we use a real dataset of 176 US stocks from January to December 2014 sampled every 5 minutes to {analyze the influence of the daily flows on the observed correlations. Finally, we propose a toy model based approach to calibrate our MFG model on data.} 
  
%
  \bigskip
  
 
\end{abstract}

\keywords{mean field games, market microstructure, crowding, multi-asset portfolio, optimal trading, optimal stochastic control}

\maketitle

\section{Introduction}

Optimal liquidation emerged as an academic field with two seminal papers:
one \cite{almgren1999value} focussed on the balance between trading fast (to minimize the uncertainty of the obtained price) and trading slow (to minimize the ``\emph{market impact}'', i.e. the detrimental influence of the trading pressure on price moves) for one representative instrument; while the other \cite{BLA98} focussed on a portfolio of tradable instruments, shedding light on the interplay with correlations of price returns and market impact.
The last twenty years have seen a lot of proposals to sophisticate the single instrument case (see these reference books \cite{book-Lehalle,book-Jaimungal,Gueant} for typical models and references) but very few on extending it to portfolios of multiple assets (with the notable exception of \cite{gueant2015convex}).
Moreover,  the usual framework for optimal execution is the one of one large privileged agent facing a ``mean-field" or a ``background noise" made of the sum of behaviours of other market participants, and academic literature seldom tackles the strategic interaction of many market participants seeking to execute large orders.

More recently, game theory has been introduced in this field. First around cases with few agents, like in \cite{schied2017high}, and then by \cite{cardaliaguet2016mfgcontrols,citeulike:13586166,caine16trade} relying on Mean Field Games (MFG) to get rid of the combinatorial complexity of games with few players, considering a lot of agents, such that their aggregated behaviour reduces to a ``\emph{anonymous mean field of liquidity}'', shared by all of them.

In this paper, we clearly start within the framework and results obtained by \cite{cardaliaguet2016mfgcontrols} and extend them to the case of a portfolio of tradable instruments. Our agents are the same as in this paper: optimal traders seeking to buy and sell positions given at the start of the day. That for, they rely on the stochastic control problem well defined for one instrument in \cite{book-Jaimungal}, which result turns to be deterministic because of its linear-quadratic nature: minimize the cost of the trading under risk-averse conditions and a terminal cost.
This framework can be compared to the one used by \cite{BLA98} in their section on portfolio, with a diagonal matrix for the market impact, and in a game played by a continuum of agents.
Note that in all these papers, including ours, the time scale is large enough to not take into account orderbook dynamics, and small enough to be used by traders and dealing desks; our typical terminal time goes from one hour to several days, and time steps have to be read in minutes.
In their paper, Cardaliaguet and Lehalle have shown how a continuum of such agents with heterogenous preferences can emulate a mix of typical brokers (having a large risk aversion and terminal cost), and opportunistic traders (with a low risk aversion). It will be the same for us.
But while their paper only addresses the strategic behavior of investors on a one single financial instrument this one handles the case of a portfolio of correlated assets. In the real applications, a financial instrument is rarely traded on its own; most investors construct diversified or hedged portfolios or index trackers by simultaneously buying and selling a large number of assets.



This has motivated the present work in which we introduce an extension of the initial Cardaliaguet-Lehalle framework to the case of a multi-asset portfolio. On the one hand, this extension allows to cover a new type of trading strategies, such as Program Trading (executing large baskets of stocks), Arbitrage Strategies (which aims to benefit from discrepancies in the dynamics of two or more assets), Hedging Strategies (where a round trip on a second asset -- typically a very liquid one -- can be used to partially hedge the price risk in the execution process of a given asset), and Index Tracking (i.e. following the composition given by a formula, like in factor investing, or simply following the market capitalization of a list of instruments). On the other hand, it enables us to understand the dependence structure between the market orders flows at ``equilibrium", and assess their influence on standard estimates of the covariance (or correlation) matrix of asset returns. These questions were independently raised by some authors and studied in seldom empirical and theoretical works (see e.g. \cite{cont2013running,benzaquen2017dissecting, hasbrouck2001common, boulatov2012informed, mastromatteo2017trading} and references therein).

Following the seminal paper \cite{cardaliaguet2016mfgcontrols}, we assume that the market impact is either instantaneous or permanent, and that the public prices -- of all assets -- are influenced by the permanent market impact of all market participants. Conversely, since the agents are affected by the public prices, they aim to anticipate the ``market mean field'' (i.e. the market trend due to the market impact of the mean field of all agents)  
by using all the information they have in order to minimize their exposition to the other agents' impact. As explained in \cite{cardaliaguet2016mfgcontrols} this leads to a Nash equilibrium configuration of MFG type, in which all agents anticipate the average trading speed of the population and adjust their execution accordingly. We refer the reader to Section \ref{first-section} for a more detailed explanation of the Mean Field Game model. In the context of a MFG with multi-asset portfolio, the strategic interaction between the agents during a trading day leads to a non-trivial relationship between the assets' order flows, which in turn generates a non-trivial impact on the intraday covariance (or correlation) matrix of asset returns. 
In Section \ref{cov-matrix-secction}, we provide an exact formula for the excess covariance matrix of returns that is endogenously generated by the trading activity, and we show that the magnitude of this effect is more significant when the market impact is large. This means for an highly crowded market, illiquid products or large initial orders 
(cf. Section \ref{cov-matrix-secction}).
These results can be related to the ones of \cite{cont2013running}, except that in this paper we do not focus our attention on distressed sells only; we are able to capture the influence of the usual variations of trading flows to deformations of the naive estimate of the covariance matrix of a portfolio of assets that are simultaneously traded. {We also carry out several numerical simulations and apply our results in an empirical analysis which is conducted on a database of market data from January to December 2014 for a pool of 176 US stocks.} At first, we exhibit the theoretical  relation between the intraday covariance matrix of net traded flows and the standard intraday covariance matrix is increasing, then we use this relation to estimate some parameters of our model, including the market impact coefficients (cf. Section \ref{cov-matrix-secction}). Next, we normalize the covariance matrix of returns to compute the  intraday median diagonal pattern (across diagonal terms), and the intraday median off-diagonal pattern (across off-diagonal terms) (cf. Section \ref{section33-Empirical}), as a way of characterizing the typical intraday evolution for diagonal and off-diagonal terms. It allows us to obtain empirically the well-known intraday pattern of volatility that is in line with our model, and we show that it flattens out as the typical size of transactions diminishes. In such a case the empirical volatility is close to its ``fundamental"  value (cf. Figure \ref{Figure4-Intrinsec}). 
{Finally, we propose a toy model based approach to calibrate our MFG model on data.}

This paper is structured as follows: in Section \ref{first-section} we formulate the problem of optimal execution of a multi-asset portfolio inside a Mean Field Game. We derive a MFG system of PDEs and prove uniqueness of solutions to that system for a general Hamiltonian function. Then we construct a regular solution in the quadratic framework, which will be considered throughout the rest of the paper. Next, we provide a convenient numerical scheme to compute the solution of the MFG system, and present several examples of an agent's optimal trading path, and the average trading path of the population. Section \ref{cov-matrix-secction} is devoted to the analysis of the crowd's trading impact on the intraday covariance matrix of returns. At the MFG equilibrium configuration, we derive a formula for the impact of assets' order flows on the dependence structure of asset returns. Next, we carry out numerical simulations to illustrate this fact, and apply our results in
an empirical analysis on a pool of $176$ US stocks. 


\newpage
\section{Optimal Portfolio Trading  Within The Crowd}\label{first-section}

\subsection{The Mean Field Game Model}\label{The Mean Field Game Model}
Consider a continuum of investors (agents), which are indexed by a parameter $a$.
Each agent has to trade a portfolio corresponding to instructions given by a portfolio manager. Think about a continuum of brokers or dealing desks executing large orders given by their clients. The portfolios are made of desired positions in a universe of $d$ different
stocks (or any financial assets). 
The initial position of any agent $a$ is denoted by ${\bf q}_0^a:=(q_0^{1,a},...,q_0^{d,a})$. For any $i$, when the initial inventory $q_0^{i,a}$ is positive,  it means the agent has to sell this number of shares (or contracts) whereas when it is negative, the agent has to sell this amount.
Given a common horizon $T>0$, we suppose that all the investors have to sell or buy within the trading period $[0,T]$. This means the agent has to sell this number of shares (or contracts) whereas when it is negative, the agent has to buy this amount.

The intraday position of each investor $a$ is modeled by a $\R^d$-valued process $({\bf q}_t^a)_{t \in[0,T]}$ which has the following dynamics:
$$
\dd {\bf q}_t^{a}={\bf v}_t^{a} \dd t,\quad {\bf q}^a(0)={\bf q}_0^a.
$$
The investor controls its trading speed $({\bf v}_t^{a})_t:= (v_t^{1,a},...,v_t^{d,a})_t$ through time, in order to achieve its trading goal.
Following the standard optimal liquidation literature, 
we assume that, for each stock,  the dynamics of the mid-price can be written as:
\begin{equation}\label{price-dynamics}
\dd S_t^{i} = \s_i \dd W_t^i + \a_i \mu_t^{i} \dd t, \quad i=1,...,d;
\end{equation}
where $\s_i>0$  is the arithmetic volatility of the $i^{th}$ stock, and  $\a_1,...,\a_d$ are nonnegative scalars modeling the magnitude of the permanent market impact. Here $(W_t^1,...,W_t^d)_{t\geq 0}$ are $d$ correlated Wiener processes, and the process 
$(\boldsymbol{\mu}_t)_{t\in[0,T]}:= (\mu_t^1,...,\mu_t^d)_t$ corresponds to the average trading speed of all investors across the portfolio of assets.
Throughout, we shall denote by $\Sigma$ the covariance matrix of the $d$-dimensional process $({\bf W}_t)_{t\in [0,T]} := (\s_1 W_t^1,...,\s_d W_t^d)_{t \in [0,T]}$ and suppose that \emph{$\Sigma$ is not singular.}
      
The performance of any investor $a$ is related to the amount of cash generated throughout the trading process.
Given the price vector $({\bf S}_t)_{t\in[0,T]} := (S_t^1,...,S_t^d)_{t\in[0,T]}$,  
the amount of cash $(X_t^a)_{t\in[0,T]}$ on the account of the trader $a$ is given by: 
$$
 X_t^a =  -  \int_0^t {\bf v}_s^{a} \cdot {\bf S}_s \dd s - \sum_{i=1}^d  \int_0^t V_{i} L_{i}\left(\frac{v_s^{i,a}}{V_i} \right) \dd s,
$$
where the positive scalars $V_1, ..., V_d$ denotes the magnitude of daily market liquidity (in practice the average volume traded each day can be used as a proxy for this parameter) of each asset. Here
$L_1,...,L_d$ are the execution cost functions (similar to the ones of \cite{gueant2015convex}), modelling the instantaneous component of market impact, which takes part in 
the average cost of trading. The family of functions  $L_i:\R \to \R$ are assumed to fulfil the following set of assumptions:
\begin{itemize}
\item  $L_i(0)=0$;
\item $L_i$ is strictly convex and nonnegative;
\item $L_i$ is asymptotically super-linear, i.e. $\lim_{|p|\to + \infty} \frac{L_{i}(p)}{|p|} = +\infty.$
\end{itemize}

The initial Cardaliaguet-Lehalle model \cite{cardaliaguet2016mfgcontrols}, corresponds to $d=1$, and a quadratic liquidity function of the form $L(p)=\kappa |p|^{2}$. 

In this paper, we consider a reward function 
that is similar to \cite{cardaliaguet2016mfgcontrols}, and corresponding to \emph {Implementation Shortfall (IS) orders}.  In this specific case the reward function of any investor $a$  is given by:
\begin{eqnarray}\label{equation-IS}
\\ \nonumber
U^ a(t,x,{\bf s},{\bf q}; \boldsymbol{\mu}) := \sup_{\bf v}\E_{x,{\bf s},{\bf q}}\left( X_T^a +  {\bf q}_T^{a} \cdot ({\bf S}_T - {\bf A}^a {\bf q}_T^{a}) - \frac{\g^a}{2} \int_t^T {\bf q}_s^a \cdot  \Sigma  {\bf q}_s^a  \dd s   \right),
\end{eqnarray}
where ${\bf A}^a := {\rm \bf diag}(A_1^a,...,A_d^a)$, $A_i^a >0$, and $\g^a$ is a non-negative scalar which quantifies the investor's risk aversion. 
That is when $\g^a=0$ the investor is indifferent about holding inventories through time, while when $\g^a$ is large the investor attempt to liquidate as fast as possible. The quadratic term ${\bf q}_T^{a}\cdot({\bf S}_T - {\bf A}^a {\bf q}_T^{a})$ penalizes  non-zero terminal inventories. One should note that the expression of the reward function \eqref{equation-IS} is derived by considering that agents are risk-averse with CARA utility function. We omit the details and refer the reader to \cite[Chapter 5]{Gueant}.

The Hamilton-Jacobi equation associated to \eqref{equation-IS} is
\begin{eqnarray} \nonumber
&&0= \p_t U^ a - \frac{\g^a}{2}  {\bf q} \cdot \Sigma {\bf q} +\frac{1}{2}\Tr\left( \Sigma D_{\bf s}^2 U^ a  \right)+ \mathbb A\boldsymbol{\mu} \cdot \nabla_{\bf s} U^ a     \\ \nonumber
&& \quad  \quad\quad\quad\quad \quad\quad\quad+\sup_{\bf v}\left\{  {\bf v} \cdot \nabla_{\bf q} U^ a -  \left( {\bf v}\cdot {\bf s}+ \sum_{i=1}^{d} V_i L_i \left( \frac{v^i}{V_i}  \right)  \right) \nabla_x U^ a  \right\},
\end{eqnarray}
with the terminal condition
$$
U^ a(T,x,{\bf s},{\bf q}; \boldsymbol{\mu})= x + {\bf q}\cdot({\bf s}-{\bf A}^a{\bf q}).
$$
In all this paper we set $\mathbb A := {\rm \bf diag}(\a_1,...,\a_d)$. Due to the simplifications that we will obtain afterwards, we suppose that $\boldsymbol{\mu}=(\boldsymbol{\mu}_{t})_{t\in [0,T]}$ is a deterministic process, so that the HJB equation above is deterministic. When $\boldsymbol{\mu}$ is a random process, that is adapted to the natural filtration of $({\bf W}_t)_{t\in [0,T]}$, we obtain a stochastic backward HJB equation which requires a specific treatment (cf. \cite{master-equation}).

Following the approach of \cite{cardaliaguet2016mfgcontrols}, we consider the following ersatz:
$$
U^a(t,x,{\bf s},{\bf q}; \boldsymbol{\mu}) = x+{\bf q}\cdot{\bf s} + u^a(t,{\bf q}; \boldsymbol{\mu}),
$$
which entails the following HJB equation for $u^a$:
\begin{equation}\label{HJB-reduced}
\frac{\g^a}{2} {\bf q} \cdot \Sigma {\bf q}=\p_t u^a  + \mathbb A \boldsymbol{\mu} \cdot {\bf q}  +\sup_{\bf v}\left\{  {\bf v} \cdot \nabla_q u^ a -  \sum_{i=1}^{d} V_i L_i \left( \frac{v^i}{V_i}   \right)  \right\},
\end{equation}
endowed with the terminal condition:
$$
u_T^a = - {\bf A}^a{\bf q}\cdot {\bf q}.
$$

For any $i=1,...,d$, let $H_i$ be the Legendre-Fenchel transform of the function $L_i$ that is given by
$$
H_i (p) := \sup_\rho p\rho - L_i(\rho).
$$
Since the maps $(L_i)_{1\leq i \leq d}$ are strictly convex, $(H_i)_{1\leq i \leq d}$ are functions of class $\C^1$, and the optimal feedback strategies associated to  \eqref{HJB-reduced} are given by
$$
v^{i,a}(t,{\bf q}) := V_i \dot H_{i} \left(\p_{q_i} u^a (t,{\bf q})\right),
$$
where $\dot H_i$ denotes the first derivative of $H_i$.
Therefore, the Mean Field Game system associated to the above problem reads: 
\begin{equation}
\label{MFG-system-IS} 
\left\{
\begin{aligned}
& \frac{\g^a}{2} {\bf q} \cdot \Sigma {\bf q}=\p_t u^a  + \mathbb A\boldsymbol{\mu} \cdot {\bf q} + \sum_{i=1}^{d} V_i H_i \left(\p_{q_i} u^a (t,{\bf q})\right) \\ 
\\
& \p_t m +\sum_{i=1}^{d}V_i \p_{q_i}\left( m  \dot H_{i} \left(\p_{q_i} u^a (t, {\bf q})\right) \right) =0 \\
\\
& \mu_t^i = \int_{(q,a)} V_i \dot H_{i} \left(\p_{q_i} u^a (t,{\bf q})\right)  m(t,\dd {\bf q}, \dd a) \\
\\
& m(0,\dd {\bf q}, \dd a)= m_0(\dd {\bf q}, \dd a), \quad u_T^a = -{\bf A}^a{\bf q}\cdot {\bf q}.
\end{aligned}
\right.
\end{equation} 
The Mean Field Game system \eqref{MFG-system-IS} describes a Nash equilibrium configuration, with infinitely many well-informed market investors: any individual player anticipates the right average trading flow on the trading period $[0,T]$, and computes his optimal strategy accordingly. Observe that we make a strong assumption by supposing that the considered group of investors has a precise knowledge of market mean field. In reality this knowledge is only partial or approximate.

Well-posedness for system \eqref{MFG-system-IS} is investigated in \cite{cardaliaguet2016mfgcontrols} within the general framework of \emph{Mean Field Games of Controls}. In this work, we provide simpler arguments to deal with the specific cases of our study. We shall suppose that $(H_i)_{1\leq i \leq d}$ are of class $\C^2$ and satisfy the following condition: 
\begin{equation}\label{assumption-on-H_i}
 \forall i=1,...,d, \ \forall p \in \R, \quad  C_{0}^{-1} \leq  \ddot H_i (p) \leq C_{0},
\end{equation}
for some $C_0>0$, and $m_0$ is a probability density \emph{with a finite second order moment}. Moreover, we suppose that the investors' index varies in a closed subset $D\subset \R$.

We say that  $(u^a,m)_{a\in D}$ is a solution to the MFG system \eqref{MFG-system-IS} if the following hold:
\begin{itemize}
\item $u^a \in \C^{1,2}([0,T]\times \R)$, for a.e $a\in D$, and $m$ in $\C([0,T] ; L^1(\R \times D))$;
\item the equation for $u^a$ holds in the classical sens, while the equation for $m$ holds in the sense of distribution;
\item for any $t\in [0,T]$,
\begin{equation}\label{condition-solution-moment-bound}
\int_{\R\times D} |{\bf q}| \dd m(t, \dd {\bf q}, \dd a) < \infty, \ \mbox{ and } \ \left| \nabla_q  u^a(t,{\bf q}) \right| \leq C_1(1+|{\bf q}|),
\end{equation}
for some $C_1>0$.
\end{itemize}

Let us start with the following remark on the uniqueness of solutions to \eqref{MFG-system-IS}.

\begin{proposition}\label{uniqueness-MFG-system-prop11}
Under the above assumptions, system \eqref{MFG-system-IS} has at most one solution.
\end{proposition}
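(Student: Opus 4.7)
The plan is to run the classical Lasry--Lions duality argument, with the twist that our coupling is through a mean field of controls $\boldsymbol{\mu}_t$, entering linearly as the drift $\mathbb{A}\boldsymbol{\mu}\cdot{\bf q}$ in the HJB equation. Given two solutions $(u_j^a, m_j)_{a\in D}$, $j=1,2$, with associated mean fields $\boldsymbol{\mu}_j$, I would set $\bar u^a := u_1^a - u_2^a$, $\bar m := m_1 - m_2$, $\bar{\boldsymbol{\mu}} := \boldsymbol{\mu}_1 - \boldsymbol{\mu}_2$, $p_j^{i,a} := \p_{q_i} u_j^a$, and consider the bilinear energy
\[
\Phi(t) := \int_{\R^d \times D} \bar u^a(t,{\bf q})\,\bar m(t,\dd {\bf q},\dd a).
\]
The boundary data in \eqref{MFG-system-IS} (common initial $m_0$, common terminal $-{\bf A}^a{\bf q}\cdot{\bf q}$) give $\Phi(0) = \Phi(T) = 0$, so $\int_0^T \Phi'(t)\,\dd t = 0$ and it suffices to control the sign of $\Phi'(t)$.

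Differentiating $\Phi$, inserting the HJB and Kolmogorov equations from \eqref{MFG-system-IS}, and integrating by parts in ${\bf q}$ (legitimate thanks to the growth bound on $\nabla_q u^a$ in \eqref{condition-solution-moment-bound} and the second moment assumption on $m_0$) produces three families of terms:
\[
\Phi'(t) = -\int \mathbb{A}\bar{\boldsymbol{\mu}}\cdot{\bf q}\,\bar m\,\dd{\bf q}\,\dd a + \sum_{i=1}^d V_i \int R_i^1\, m_1\,\dd{\bf q}\,\dd a + \sum_{i=1}^d V_i \int R_i^2\, m_2\,\dd{\bf q}\,\dd a,
\]
where the Bregman-type residuals $R_i^1 := (p_1^{i,a}-p_2^{i,a})\dot H_i(p_1^{i,a}) - [H_i(p_1^{i,a}) - H_i(p_2^{i,a})]$ and $R_i^2 := -(p_1^{i,a}-p_2^{i,a})\dot H_i(p_2^{i,a}) + [H_i(p_1^{i,a}) - H_i(p_2^{i,a})]$ both satisfy $R_i^j \geq \tfrac{1}{2C_0}(p_1^{i,a}-p_2^{i,a})^2 \geq 0$, via a Taylor expansion combined with the lower bound $\ddot H_i \geq C_0^{-1}$ from \eqref{assumption-on-H_i}. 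The non-standard first term is handled by the key observation that testing the Kolmogorov equation against $q_i$ yields $\int q_i\,\bar m(t,\cdot) = \int_0^t \bar\mu^i(s)\,\dd s =: M^i(t)$, so a time integration by parts produces
\[
-\int_0^T \int \mathbb{A}\bar{\boldsymbol{\mu}}\cdot{\bf q}\,\bar m\,\dd{\bf q}\,\dd a\,\dd t = -\sum_{i=1}^d \tfrac{\a_i}{2}\bigl(M^i(T)\bigr)^2 \leq 0,
\]
the sign being forced by the nonnegativity of the permanent-impact coefficients $\a_i$.

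Since the sum of a nonpositive quantity and a nonnegative one equals zero, both must vanish. The annihilation of the Bregman residuals forces $p_1^{i,a} = p_2^{i,a}$ on the supports of $m_1$ and $m_2$, so the two feedback velocity fields $V_i\dot H_i(\p_{q_i} u_j^a)$ coincide there. The upper bound $\ddot H_i \leq C_0$ together with the growth bound on $\nabla_q u^a$ make these velocities Lipschitz-regular enough for the Kolmogorov equation with datum $m_0$ to have a unique solution, yielding $m_1 = m_2$ and therefore $\boldsymbol{\mu}_1 = \boldsymbol{\mu}_2$; plugging back into the HJB equation and invoking standard uniqueness for the resulting Hamilton--Jacobi equation then gives $u_1^a = u_2^a$. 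The only genuine obstacle is the mean-field-of-controls coupling, which is not covered by the Lasry--Lions monotone framework; the fix is to rewrite its time integral, via the Kolmogorov equation, as a negative semi-definite quadratic form in $M^i(T)$ with weights $\a_i \geq 0$.
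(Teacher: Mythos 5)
Your overall strategy is exactly the paper's: the Lasry--Lions duality pairing $\Phi(t)=\int\bar u^a\,\bar m$, the splitting into Bregman residuals weighted by $m_1$ and $m_2$ (your $R_i^1$, $R_i^2$ are precisely the two convexity defects the paper bounds below by $\tfrac{1}{2C_0}|\p_{q_i}u_1-\p_{q_i}u_2|^2$ using $\ddot H_i\ge C_0^{-1}$), and the treatment of the mean-field-of-controls coupling via $\tfrac{\dd}{\dd t}\int {\bf q}\,\dd\bar m=\boldsymbol{\mu}_1-\boldsymbol{\mu}_2$, which converts $\int\bar m\,\mathbb A\bar{\boldsymbol{\mu}}\cdot{\bf q}$ into $\tfrac12\tfrac{\dd}{\dd t}\,\mathbb A\bar{\bf E}\cdot\bar{\bf E}$ and hence, after time integration, into a quadratic form in $\bar{\bf E}(T)=(M^i(T))_i$. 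Up to and including your displayed formula for $\int_0^T\Phi'(t)\,\dd t$, this is the paper's computation.

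The gap is the final inference. By your own signs you arrive at $0=\int_0^T\Phi'=-\sum_i\tfrac{\a_i}{2}\bigl(M^i(T)\bigr)^2+(\text{nonnegative Bregman integral})$, i.e. an identity $A+B=0$ with $A\le 0$ and $B\ge 0$. That only yields $B=-A$; it does not force $A=B=0$ (take $A=-1$, $B=1$), so the sentence ``both must vanish'' is a non sequitur and nothing downstream of it is justified. The paper's conclusion rests on its identity \eqref{energy-identity-uniqueness-proof}, in which the term $\tfrac{C}{2}\mathbb A\bar{\bf E}(T)\cdot\bar{\bf E}(T)$ is recorded \emph{with the same sign as} the Bregman integral, so that a sum of two nonnegative quantities equals zero and each must vanish; your bookkeeping places that term on the opposite side. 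You must therefore either redo the sign computation and show the $\bar{\bf E}(T)$--term genuinely lands on the nonnegative side of the identity (as the paper asserts), or supply a substitute argument — for instance, bounding $\sum_i\a_i(M^i(T))^2$ by the Bregman integral via Cauchy--Schwarz and absorbing it under a smallness condition on $|\mathbb A|$, in the spirit of Proposition \ref{proposition-existence-Enonlocal}. As written, the step that actually closes the proof is missing.
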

\begin{proof}
Let $(u_1^a,m_1)_{a\in D}$ and $(u_2^a , m_2)_{a \in D}$ be two solutions to \eqref{MFG-system-IS}, and set $\bar u^a := u_1^a - u_2^a$, $\bar m := m_1-m_2$. At first, let us assume that $m_1$, $m_2$ are smooth so that the computations below holds. By using system \eqref{MFG-system-IS}, we have:

\begin{eqnarray}\label{energy-identity} \\ \nonumber
\frac{\dd}{\dd t} \int_{(q,a)} \bar u^a \bar m &=& -\int_{(q,a)} \bar{m}\left\{ \sum_{i=1}^{d}  V_i \left( H_i \left(\p_{q_i} u_1\right)-H_i \left(\p_{q_i} u_2\right)  \right) +\mathbb A (\boldsymbol{\mu}_1-\boldsymbol{\mu}_2) \cdot \bf q   \right\} \\ \nonumber
&& \quad \quad - \int_{(q,a)} \bar{u}\left\{ \sum_{i=1}^{d}V_i \left( \p_{q_i}\left( m_1  \dot H_{i} \left(\p_{q_i} u_1\right) \right) - \p_{q_i}\left( m_2  \dot H_{i} \left(\p_{q_i} u_2\right) \right)\right) \right\},
\end{eqnarray}
where $\boldsymbol{\mu}_1$, $\boldsymbol{\mu}_2$ correspond respectively to $(u_1^a,m_1)_{a\in D}$ and $(u_2^a , m_2)_{a \in D}$.

On the one hand, note that
$$
\int_{(q,a)} \bar{m}  \mathbb A (\boldsymbol{\mu}_1-\boldsymbol{\mu}_2) \cdot {\bf q}   = \frac{1}{2}\frac{\dd}{\dd t} \mathbb A \bar {\bf E} \cdot  \bar {\bf E},
\quad
\mbox{ where }
\quad
\bar {\bf E}(t) :=\int_{(q,a)} {\bf q} \dd \bar{m}(t).
$$
This follows from 
$$
\frac{\dd}{\dd t} \bar {\bf E} = \boldsymbol{\mu}_1-\boldsymbol{\mu}_2,
$$
which is in turn  obtained from system \eqref{MFG-system-IS} after an integration by parts.

On the other hand, by virtue of \eqref{assumption-on-H_i} we have
\begin{eqnarray}\nonumber
&&\sum_{i=1}^{d} V_i  \int \left( \bar{m}    \left( H_i \left(\p_{q_i} u_1\right)-H_i \left(\p_{q_i} u_2\right)  \right) 
- \p_{q_i} \bar{u} \left( m_1 \dot H_{i} \left(\p_{q_i} u_1\right) - m_2  \dot H_{i} \left(\p_{q_i} u_2\right) \right)
 \right) \\ \nonumber
 && \quad = -\sum_{i=1}^{d}V_i \int \left(  m_1   \left( H_i \left(\p_{q_i} u_2\right)-H_i \left(\p_{q_i} u_1\right)-  \dot H_{i} \left(\p_{q_i} u_1\right)  \p_{q_i} (u_2-u_1)  \right)   \right) \\ \nonumber
 &&\quad -  \sum_{i=1}^{d} V_i \int  \left(  m_2   \left( H_i \left(\p_{q_i} u_1\right)-H_i \left(\p_{q_i} u_2\right)- \dot H_{i} \left(\p_{q_i} u_2\right)  \p_{q_i} (u_1-u_2) \right)   \right) \\ \nonumber
 && \quad \leq - \min_{1\leq i \leq d} V_i \int_{(q,a)} \frac{(m_1+m_2)}{2C_0}\left| \nabla_{\bf q} u_1-\nabla_{\bf q} u_2\right|^2.
\end{eqnarray}
Therefore, \eqref{energy-identity} provides
\begin{equation}\label{energy-identity-uniqueness-proof}
\min_{1\leq i \leq d} V_i \int_0^T  \int_{(q,a)} \left| \nabla_{\bf q} u_1(s)-\nabla_{\bf q} u_2(s)\right|^2 \dd (m_1+m_2)\dd s + \frac{C}{2}\mathbb A\bar {\bf E}(T)\cdot \bar {\bf E}(T) =0.
\end{equation}
By using a standard regularization process, identity \eqref{energy-identity-uniqueness-proof} holds true for any solutions $(u_1^a,m_1)_{a\in D}$ and $(u_2^a , m_2)_{a \in D}$ of \eqref{MFG-system-IS}. Thus, one can use this identity to deduce that $\nabla_q u_1\equiv \nabla_q u_2$ on $\left\{ m_1>0 \right\}\cup \left\{ m_2>0 \right\}$, so that $m_1,m_2$ solve the same transport equation:
$$
\p_t \nu +\sum_{i=1}^{d}V_i \p_{q_i}\left( \nu  \dot H_{i} \left(\p_{q_i} u_1^a (t, {\bf q})\right) \right) =0 , \quad \nu_{t=0}=m_0.
$$
This entails $m_1\equiv m_2$  and so $u_1\equiv u_2$, by virtue of our regularity assumptions.
\end{proof}


\subsection{Quadratic Liquidity Functions}\label{Quadratic Liquidity Functions}
In practice the liquidity function is often chosen as strictly convex power function of the form: $L(p)=\eta |p|^{1+\phi} + \omega |p|$, with $\eta, \phi, \omega >0$. The additional term $\omega |p|$ captures proportional costs such as the bid-ask spread, taxes, fees paid to brokers, trading venues and custodians
 \cite{Gueant}. The quadratic case ($\phi=1$) -- that is also considered in \cite{cardaliaguet2016mfgcontrols} -- is particularly interesting because it induces some considerable simplifications and allows to compute the solutions at a relatively low cost. 
Throughout the rest of this paper, we suppose that the liquidity functions take the following simple form:
\begin{equation}\label{definition-quadratic-liquidity-functions}
L^{i}(p)=\eta_i |p|^2 \ \ \mbox{ where } \ \ \eta_i>0, \ \ \ i=1,...,d.
\end{equation}

Following the approach of \cite{cardaliaguet2016mfgcontrols}, we start by setting $ \bar m_0 (\dd a) := \int_{\bf q} m_0(\dd {\bf q}, \dd a)$. We shall suppose that
\begin{equation}\label{desintegrate-assumption}
\bar m_0(a) \neq 0, \quad \mbox{ for a.e } a\in D,
\end{equation}
and that investors do not change their preference parameter $a$ over time. Thus, we always have $\int_q m(t,\dd {\bf q}, \dd a) = \bar m_0(\dd a)$, so that we can disintegrate $m$ into
$$
m(t,\dd {\bf q}, \dd a)=m^a(t, \dd {\bf q}) \bar m_0(\dd a),
$$
where $m^a(t, \dd {\bf q})$ is a probability measure in ${\bf q}$ for $\bar m_0$-almost any $a$. Let us now define the following process which plays an important role in our analysis:
$$
{\bf E}^a(t):= \int_q {\bf q}  \ m^a(t,\dd {\bf q}) \quad \forall t\in [0,T], \ \mbox{ for a.e } a\in D,
$$
and we shall denote by $E^{a,1},...,E^{a,d}$ the components of ${\bf E}^a$. By virtue of the PDE satisfied by $m$, observe that ${\bf E}^a$ satisfies the following:
\begin{eqnarray} \label{E-derivative-mu-comput}
\dot {\bf E}^{a}(t) &=& \int_q {\bf q} \ \p_t m^a(t,\dd {\bf q}) \\ \nonumber
&=& \int_q \left( \frac{V_i}{2\eta_i} \p_{q_i} u^a(t, {\bf q}) \right)_{1\leq i \leq d} m^a(t,\dd {\bf q}),
\end{eqnarray}
so that
\begin{equation}\label{E-derivative-mu}
\boldsymbol{\mu}_t= \int_a \dot {\bf E}^{a}(t) \dd \bar m_0( a).
\end{equation}

Due to the existence of linear and quadratic terms in the equation satisfied by $u^a$, we expect the solution to have the following form:
\begin{equation}\label{ertsaz-small-u}
u^a(t, {\bf q} )= h_a(t) + {\bf q}' \cdot \Hh_a(t) + \frac{1}{2} {\bf q}'\cdot\mathbb H_a(t) \cdot {\bf q}
\end{equation}
where $h_a(t)$ is $\R$-valued function, $\Hh_a(t):=(\Hh_a^i(t))_{1\leq i \leq d}$ is $\R^d$-valued function, and the map $\mathbb H_a(t):=(\mathbb H_a^{i,j}(t))_{1\leq i,j \leq d}$ take values in the set of $\R^{d\times d}$-symmetric matrices. Inserting \eqref{ertsaz-small-u} in the HJB equation of \eqref{MFG-system-IS} and collecting like terms in ${\bf q}$ leads to the following coupled system of BODEs:
\begin{equation}
\label{identification-1} 
\left\{
\begin{aligned}
& \dot h_{a}= -  \mathbb V\Hh_a \cdot \Hh_a \\ 
&  \dot \Hh_{a}=-\mathbb A\boldsymbol{\mu} - 2 \mathbb H_a\mathbb V\Hh_a \\
& \dot {\mathbb H}_{a}= - 2\mathbb H_a \mathbb V \mathbb H_a +\g^a \Sigma \\ 
&h_a(T)=0, \ \ \Hh_a(T)=0, \ \ \mathbb H_a(T)=-2{\bf A}^a,
\end{aligned}
\right.
\end{equation} 
where $\mathbb V:=\mbox{diag}\left(\frac{V_1}{4\eta_1},..., \frac{V_d}{4\eta_d}\right)$. In order to solve completely \eqref{identification-1} we need to know $\boldsymbol{\mu}$, or the process $\dot {\bf E}^{a}$  thanks to   \eqref{E-derivative-mu}. Thus, one needs an additional equation to completely solve the problem. 

By virtue of \eqref{E-derivative-mu-comput}, we have
\begin{equation}\label{expression-speed}
\dot {\bf E}^{a} = 2\mathbb V \Hh_a +2 \mathbb V\mathbb H_a {\bf E}^a.
\end{equation}
By combining this equation with system \eqref{identification-1} one obtains the following FBODE:
\begin{equation}
\label{reduced-equation-IS-general}
\left\{
\begin{aligned}
& \ddot{\bf E}^{a}= -2\mathbb V\mathbb A \int_a \dot {\bf E}^{a} \dd \bar{m}_0(a)   +2\g_a \mathbb V\Sigma{\bf E}^a \\ 
& {\bf E}^a(0)=E_0^a:=\int_q {\bf q}   m_0({\bf q},a)/\bar m_0(a)\\ 
&\dot{\bf E}^{a}(T)+4\mathbb V {\bf A}^a {\bf E}^a(T)=0.
\end{aligned}
\right.
\end{equation} 
This system is a generalized form of the one that is studied in  \cite{cardaliaguet2016mfgcontrols}, and summarizes the whole market mean field. Observe that the permanent market impact acts as a friction term while the market risk terms act as a pushing force toward a faster execution. The investors heterogeneity is taken into account in the first derivative term, which means that the contribution of all the market participants to the average trading flow is already anticipated by all agents. 

System \eqref{reduced-equation-IS-general} is our starting point to solve the MFG system \eqref{MFG-system-IS} in the quadratic case.
Due to the forward-backward structure of system \eqref{reduced-equation-IS-general}, we need a smallness condition on $\mathbb A$ in order to construct a solution. This assumption is also considered in \cite{cardaliaguet2016mfgcontrols}, and is not problematic from a modeling standpoint since $| \mathbb A |$ is generally small in applications (cf. Section \ref{section33-Empirical}). Let us present the construction of solutions to system \eqref{reduced-equation-IS-general}. 

\begin{proposition}\label{proposition-existence-Enonlocal}
Suppose that ${\bf A}^a, \g_a \in L^{\infty}(D)$, then there exists $\a_0>0$ such that, for $|\mathbb A| \leq \a_0$, the following hold: 
\begin{enumerate}[ label=(\roman*)]
\item there exists a unique process ${\bf E}^a$ in $L_{\bar m_0}^1(D; \C^1 ([0,T]))$ which solves system \eqref{reduced-equation-IS-general}; 
\item there exists a constant $C_2>0$, such that
\begin{equation}\label{estimate-trading-order-flow-speed}
\sup_{0\leq w \leq T}| \boldsymbol{\mu}_w | \leq C_2\left( 1 + \int_a | E_0^a | \dd \bar m_0  \right) e^{C_2 T},
\end{equation}
where $(\boldsymbol{\mu}_t)_{t\in[0,T]}$ is given by  \eqref{E-derivative-mu}.
\end{enumerate}
\end{proposition}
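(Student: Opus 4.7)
The plan is to recast the forward-backward ODE \eqref{reduced-equation-IS-general} as a fixed-point problem for the aggregate flow $\boldsymbol{\mu}\in\C([0,T];\R^d)$, and to exhibit contractivity when $|\mathbb A|$ is sufficiently small. Given a candidate $\boldsymbol{\mu}$, the system decouples across agents into a linear two-point boundary value problem for each ${\bf E}^a$, which I will solve using the Riccati ansatz already present in \eqref{identification-1}--\eqref{expression-speed}: first integrate backward the matrix Riccati equation $\dot{\mathbb H}_a=-2\mathbb H_a\mathbb V\mathbb H_a+\g^a\Sigma$ with $\mathbb H_a(T)=-2{\bf A}^a$ (which is independent of $\boldsymbol{\mu}$); next integrate backward the \emph{linear} equation $\dot\Hh_a=-\mathbb A\boldsymbol{\mu}-2\mathbb H_a\mathbb V\Hh_a$ with $\Hh_a(T)=0$; and finally integrate forward $\dot{\bf E}^a=2\mathbb V(\mathbb H_a{\bf E}^a+\Hh_a)$ with ${\bf E}^a(0)=E_0^a$. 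Evaluating at $t=T$ shows that $\dot{\bf E}^a(T)+4\mathbb V{\bf A}^a{\bf E}^a(T)=0$ is then automatic.

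The main technical prerequisite is global solvability on $[0,T]$ of the matrix Riccati equation, together with a uniform bound $\|\mathbb H_a\|_\infty\leq C$ for $\bar m_0$-almost every $a\in D$. Since $\mathbb H_a(T)=-2{\bf A}^a$ is negative definite and $\g^a\Sigma$ is positive semi-definite, classical LQ arguments (comparison with a scalar Riccati, or monotonicity on the cone of negative symmetric matrices) yield such a bound depending only on $\|{\bf A}^a\|_{L^\infty(D)}$, $\|\g^a\|_{L^\infty(D)}$, $T$, $\mathbb V$ and $\Sigma$. Gronwall applied to the $\Hh_a$ equation then gives $\|\Hh_a\|_\infty\leq C(1+|\mathbb A|\,\|\boldsymbol{\mu}\|_\infty)$, and Gronwall applied forward to ${\bf E}^a$ gives $\|{\bf E}^a\|_\infty\leq C(|E_0^a|+1+|\mathbb A|\,\|\boldsymbol{\mu}\|_\infty)e^{CT}$.

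Define $\Phi:\C([0,T];\R^d)\to\C([0,T];\R^d)$ by $\Phi(\boldsymbol{\mu})(t):=\int_D\dot{\bf E}^a(t)\,\dd\bar m_0(a)$. Since $\boldsymbol{\mu}$ enters the construction only through the source $-\mathbb A\boldsymbol{\mu}$ of the $\Hh_a$ equation, propagating a difference $\boldsymbol{\mu}_1-\boldsymbol{\mu}_2$ through the same Gronwall estimates yields
\[
\|\Phi(\boldsymbol{\mu}_1)-\Phi(\boldsymbol{\mu}_2)\|_\infty\leq C(T)\,|\mathbb A|\,\|\boldsymbol{\mu}_1-\boldsymbol{\mu}_2\|_\infty.
\]
Choosing $\a_0>0$ such that $C(T)\,\a_0<1$ makes $\Phi$ a strict contraction, and Banach's fixed-point theorem produces a unique $\boldsymbol{\mu}^*$, hence a unique $({\bf E}^a)_{a\in D}\in L^1_{\bar m_0}(D;\C^1([0,T]))$, which proves (i). For (ii), applying the a priori bounds on ${\bf E}^a$ and $\Hh_a$ at the fixed point and integrating against $\bar m_0$ gives $\|\boldsymbol{\mu}^*\|_\infty\leq C(1+\int_D|E_0^a|\,\dd\bar m_0+|\mathbb A|\,\|\boldsymbol{\mu}^*\|_\infty)e^{CT}$; absorbing the last term on the left (shrinking $\a_0$ if necessary) yields exactly \eqref{estimate-trading-order-flow-speed}.

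The main obstacle is precisely the forward-backward coupling through the nonlocal average $\int_a\dot{\bf E}^a\,\dd\bar m_0$: without a smallness condition on $|\mathbb A|$ the contraction cannot be closed on the whole interval $[0,T]$ by this elementary route, which is why the assumption $|\mathbb A|\leq\a_0$ is imposed, in line with the strategy of \cite{cardaliaguet2016mfgcontrols}.
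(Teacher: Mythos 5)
Your proposal is correct and follows essentially the same route as the paper: decouple via the backward matrix Riccati equation for $\mathbb H_a$ (with the uniform bound $-2{\bf A}^a-T\g^a\Sigma\leq\mathbb H_a\leq 0$), reduce the forward--backward system to a fixed-point relation that is contractive for $|\mathbb A|$ small, and obtain \eqref{estimate-trading-order-flow-speed} by Gr\"onwall. The only (immaterial) difference is that you run Banach's theorem on the aggregate $\boldsymbol{\mu}\in\C([0,T];\R^d)$ while the paper fixed-points on the family $(\dot{\bf E}^a)_a$ in $L^1_{\bar m_0}(D;\C([0,T]))$; since the map depends on the unknown only through the average against $\bar m_0$, the two formulations are equivalent.
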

\begin{proof}
At first, note that the solution $\mathbb H_{a}$ to the matrix Riccati equation in \eqref{identification-1} exists on $[0,T]$, is unique, depends only on data, and satisfies (see e.g. \cite{kuvcera1973review})
\begin{equation}\label{estimate-riccati-matrix}
-2 {\bf A}^a - T\g^a \Sigma  \leq  \mathbb H_{a} \leq 0,
\end{equation}
where the order in the above inequality should be understood in the sense of positive symmetric matrices.  Moreover, note that $\Sigma \mathbb V$ and $\mathbb V \Sigma$ are both diagonalizable with non-negative eigenvalues. Thus by using the ODE satisfied by $\mathbb H_a$, we know that $\mathbb H_a \mathbb V$  and  $\mathbb V \mathbb H_a$ are both diagonalizable with a constant change of basis matrix. In particular, it holds that
\begin{equation} \label{commutatoin-prop1.2}
\left[ \mathbb H_a(t) \mathbb V , \int_t^w \mathbb H_a(u) \mathbb V \dd u \right]=\left[ \mathbb V  \mathbb H_a(t) , \int_t^w \mathbb V \mathbb  H_a(u) \dd u \right]=0
\end{equation}
for any $0\leq t,w \leq T$, where the symbol $[B, A]$ denotes the Lie Bracket: $[B, A]=BA-AB$.

Given  $\mathbb H_{a}$, we aim to construct $\dot {\bf E} ^a $ in $L_{\bar m_0}^1(D; \C ([0,T]))$ by solving a fixed point relation, and then deduce 
${\bf E} ^a$.  For that purpose, we start by deriving a fixed point relation for $\dot {\bf E} ^a $. 
By virtue of \eqref{commutatoin-prop1.2}, observe that any solution ${\bf E} ^a$ to \eqref{reduced-equation-IS-general} fulfills \eqref{expression-speed}  with (see e.g. \cite{martin1968exponential})
$$
\Hh_a(t)= \int_t^T \exp\left\{\int_t^w  2 \mathbb H_a(s) \mathbb V \dd s\right\}\mathbb A \int_a \dot {\bf E}^{a}(w) \dd \bar{m}_0(a) \dd w,
$$
so that
\begin{multline*}
{\bf E} ^a(t)= \exp\left\{ \int_0^t 2 \mathbb V \mathbb H_a(w) \dd w \right\} E_0^a\\
+ 2 \mathbb V \int_0^t \exp\left\{ \int_\tau^t 2 \mathbb V \mathbb H_a(w) \dd w \right\} \int_\tau^T \exp\left\{ \int_\tau^w 2  \mathbb H_a(s) \mathbb V \dd s \right\} \mathbb A  \int_a \dot {\bf E}^{a}(w) \dd \bar{m}_0(a) \dd w \dd \tau.
\end{multline*}
By combining this relation with \eqref{expression-speed}, we deduce that $\dot {\bf E} ^a$ satisfies the following fixed point relation:
\begin{multline}\label{second-E-prop-wp}
{\bf x} ^a(t) = \Phi_{\mathbb A}({\bf x}^a)(t) :=  2\mathbb V \mathbb H_a(t) \exp\left\{ \int_0^t 2 \mathbb V \mathbb H_a(w) \dd w \right\} E_0^a\\
+ 4 \mathbb V \mathbb H_a(t) \mathbb V \int_0^t \exp\left\{ \int_\tau^t 2 \mathbb V \mathbb H_a(w) \dd w \right\} \int_\tau^T \exp\left\{ \int_\tau^w 2  \mathbb H_a(s) \mathbb V \dd s \right\} \mathbb A  \int_a {\bf x}^{a}(w) \dd \bar{m}_0(a) \dd w \dd \tau\\
+ 2 \mathbb V \int_t^T  \exp\left\{ \int_t^w 2  \mathbb H_a(w) \mathbb V \dd w \right\} \mathbb A \int_a  {\bf x}^{a}(w) \dd \bar{m}_0(a) \dd w.
\end{multline}
Conversely, one checks that if ${\bf x}^a$  is a solution to the fixed point relation \eqref{second-E-prop-wp}, for a.e. $a\in D$, then ${\bf E}^a(t)= E_0^a + \int_0^t {\bf x}^a(s) \dd s$ is a solution to system \eqref{reduced-equation-IS-general}.

To solve the fixed point relation \eqref{second-E-prop-wp}, one just uses Banach fixed point Theorem on $\Phi_{\mathbb A}: \mathbb X \to \mathbb X$, where $\mathbb X := L_{\bar m_0}^1(D; \C ([0,T]))$. It is clear that $\Phi_{\mathbb A}$ is a contraction for $|\mathbb A|$ small enough: indeed, given ${\bf x},{\bf y} \in \mathbb X$, it holds that:
\begin{equation*}
\left| \Phi_{\mathbb A}({\bf x}^a)(t) - \Phi_{\mathbb A}({\bf y}^a)(t)  \right|  \leq C | \mathbb A | \left\| {\bf x} - {\bf y}  \right\|_{\mathbb X} 
\end{equation*}
where $C>0$ depends only on $T, \| \gamma \|_{\infty}, \| {\bf A} \|_{\infty}, |\mathbb V| $ and $|\Sigma|$. Thus, given the solution ${\bf x}^a$ to \eqref{second-E-prop-wp}, the function ${\bf E}^a(t)= E_0^a + \int_0^t {\bf x}^a(s) \dd s$ solves  \eqref{reduced-equation-IS-general}, and belongs to $L_{\bar m_0}^1(D; \C ([0,T]))$ given that $m_0$ have a finite first order moment. Estimate \eqref{estimate-trading-order-flow-speed} ensues from Gr{\"o}nwall's Lemma.
\end{proof}

We are now in position to solve the MFG system \eqref{MFG-system-IS} in the case of quadratic liquidity functions \eqref{definition-quadratic-liquidity-functions}.
\begin{theorem}
Under  \eqref{definition-quadratic-liquidity-functions}, \eqref{desintegrate-assumption}, and assumptions of Proposition \ref{proposition-existence-Enonlocal}, the Mean Field Game system \eqref{MFG-system-IS} has a unique solution.
\end{theorem}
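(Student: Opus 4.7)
The plan is to dispatch uniqueness and existence separately, invoking the machinery already established.

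\emph{Uniqueness.} With the quadratic choice \eqref{definition-quadratic-liquidity-functions}, the Legendre transform is $H_i(p)=p^2/(4\eta_i)$, so $\ddot H_i \equiv 1/(2\eta_i)$ is a positive constant. Hypothesis \eqref{assumption-on-H_i} is satisfied with $C_0 = \max_i\{2\eta_i, 1/(2\eta_i)\}$, so Proposition \ref{uniqueness-MFG-system-prop11} applies directly.

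\emph{Existence.} I would reverse the ansatz-driven reduction that produced \eqref{reduced-equation-IS-general}. First, Proposition \ref{proposition-existence-Enonlocal} delivers, for $|\mathbb A|\le \alpha_0$, a unique $\mathbf E^a \in L^1_{\bar m_0}(D;\C^1([0,T]))$ solving the forward-backward system \eqref{reduced-equation-IS-general}, and defines an associated bounded $\boldsymbol\mu$ via \eqref{E-derivative-mu}. Next, for each $a$, solve the matrix Riccati equation for $\mathbb H_a$ backward from $\mathbb H_a(T)=-2\mathbf A^a$ (existence, uniqueness and the a-priori bound come from \eqref{estimate-riccati-matrix}); with $\boldsymbol\mu$ and $\mathbb H_a$ now known, the equation for $\mathcal H_a$ in \eqref{identification-1} is a linear backward ODE with terminal data $0$, and $h_a$ is then obtained by a direct integration. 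Define $u^a(t,\mathbf q)$ through the ansatz \eqref{ertsaz-small-u}; by construction it lies in $\C^{1,2}$, solves the HJB equation in \eqref{MFG-system-IS} and matches the terminal condition $u^a_T = -\mathbf A^a \mathbf q \cdot \mathbf q$.

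It remains to build the density. For $\bar m_0$-a.e.\ $a$, let $m^a$ be the push-forward of the conditional initial measure $m_0(\cdot,a)/\bar m_0(a)$ under the flow of the velocity field $v^{i,a}(t,\mathbf q)= (V_i/2\eta_i)\,\partial_{q_i} u^a(t,\mathbf q)$. Because $\nabla_{\mathbf q} u^a = \mathcal H_a + \mathbb H_a \mathbf q$ is affine in $\mathbf q$, the ODE flow exists globally, $m^a \in \C([0,T];\mathcal P(\R^d))$, and $m(t,d\mathbf q, da) := m^a(t,d\mathbf q)\bar m_0(da)$ belongs to $\C([0,T];L^1(\R^d\times D))$ and satisfies the continuity equation in the distributional sense. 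The crucial closure step is to verify that the average trading speed recomputed from this $m$ coincides with the $\boldsymbol\mu$ fixed in Step 1. Integrating the continuity equation against $\mathbf q$ and using the affine form of $\nabla_{\mathbf q} u^a$ yields, for the first moment $\tilde{\mathbf E}^a(t) := \int \mathbf q\, m^a(t,d\mathbf q)$, exactly the linear ODE $\dot{\tilde{\mathbf E}}^a = 2\mathbb V \mathcal H_a + 2\mathbb V \mathbb H_a \tilde{\mathbf E}^a$ with $\tilde{\mathbf E}^a(0)=E_0^a$; this is the same linear ODE satisfied by $\mathbf E^a$ via \eqref{expression-speed}, so $\tilde{\mathbf E}^a \equiv \mathbf E^a$ by ODE uniqueness, and \eqref{E-derivative-mu-comput}–\eqref{E-derivative-mu} return the original $\boldsymbol\mu$.

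Finally, the regularity requirements \eqref{condition-solution-moment-bound} follow from the explicit ansatz: $|\nabla_{\mathbf q} u^a(t,\mathbf q)| \le |\mathcal H_a(t)| + |\mathbb H_a(t)|\,|\mathbf q|$, and both factors are bounded on $[0,T]$ uniformly in $a$ thanks to $\|\mathbf A^a\|_\infty + \|\gamma^a\|_\infty<\infty$, \eqref{estimate-riccati-matrix}, and the linear ODE bound on $\mathcal H_a$ using \eqref{estimate-trading-order-flow-speed}; the first-moment bound on $m$ follows from the affine flow and the second-moment assumption on $m_0$. The main obstacle is the closure in the penultimate paragraph --- matching the mean field constructed from $m$ with the one used to build $u^a$ --- but the affine structure of the feedback collapses this to an ODE uniqueness statement rather than a genuine fixed-point argument, which is what makes the quadratic case tractable.
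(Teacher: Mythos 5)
Your proposal is correct and follows essentially the same route as the paper: existence via Proposition \ref{proposition-existence-Enonlocal}, the Riccati/linear backward ODEs of \eqref{identification-1}, the quadratic ansatz \eqref{ertsaz-small-u}, transport of $m_0$ along the resulting affine velocity field, and uniqueness via Proposition \ref{uniqueness-MFG-system-prop11}. Your closure step (matching the first moment of the transported measure with $\mathbf{E}^a$ by uniqueness for the linear first-moment ODE) is a slightly more explicit phrasing of the paper's appeal to uniqueness for \eqref{reduced-equation-IS-general}, and your moment-bound verification parallels the paper's Gr\"onwall argument.
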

\begin{proof}
Since \eqref{reduced-equation-IS-general} is solvable thanks to Proposition \ref{proposition-existence-Enonlocal}, we can now solve completely system \eqref{identification-1} and deduce $u^a(t,{\bf q}; \boldsymbol{\mu})$ thanks to \eqref{ertsaz-small-u}. In fact, owing to \eqref{commutatoin-prop1.2} we know that (cf. \cite{martin1968exponential}):
\begin{equation*}
\left\{
\begin{aligned}
& \Hh_a(t) = \int_t^T \exp\left\{\int_t^w  2 \mathbb H_a(s) \mathbb V \dd s\right\}\mathbb A \boldsymbol{\mu}_w \dd w\\
& h_{a}(t)= \int_t^T  \mathbb V\Hh_a(w)\cdot \Hh_a(w) \dd w,
\end{aligned}
\right.
\end{equation*} 
so that the function $u^a(t,{\bf q}; \boldsymbol{\mu})$, that is given by \eqref{ertsaz-small-u}, is $\C^{1,2}([0,T]\times \R)$. Furthermore, by virtue of \eqref{estimate-trading-order-flow-speed}-\eqref{estimate-riccati-matrix}, note that
\begin{equation}\label{linear-growth-estimate-Th13}
\left| \nabla_q  u^a(t,{\bf q}) \right| \leq C(1+|{\bf q}|)
\end{equation}
for some constant $C>0$ which depends only on $T$ and data.

Now, as $u^a$ is regular and satisfies \eqref{linear-growth-estimate-Th13}, we know that the transport equation
$$
\p_t m^a +\sum_{i=1}^{d}V_i \p_{q_i}\left( m^a  \p_{q_i} u^a (t, {\bf q}) \right) =0, \quad m^a(0,\dd {\bf q})= m_0(\dd {\bf q}, \dd a)/ \bar m_0(a)
$$
has a unique weak solution $m^a \in \C([0,T] ; L^1(\R))$ for a.e $a\in D$, so that $m :=m^a \bar m_0$ solves, in the weak sense, the following Cauchy problem:
$$
\p_t m +\sum_{i=1}^{d}V_i \p_{q_i}\left( m  \p_{q_i} u^a (t, {\bf q}) \right) =0, \quad m(0,\dd {\bf q}, \dd a)= m_0(\dd {\bf q}, \dd a).
$$
In addition, one easily checks that $m$ belongs to $\C([0,T] ; L^1(\R\times D))$.

By invoking the uniqueness of solutions to \eqref{reduced-equation-IS-general}, we have
$$
{\bf E}^a(t)= \int_q {\bf q} m^a(t,q) \dd {\bf q} \ \ \mbox{ for a.e } a \in D.
$$
Thus through the same computations as in \eqref{E-derivative-mu-comput} we obtain 

$$
\mu_t^i = \int_{(q,a)} V_i \p_{q_i} u^a (t,{\bf q})  m^a(t, {\bf q}) \bar m_0(a) \dd a \dd {\bf q}, \ \ i = 1,...,d,
$$
so that $(u^a,m)_{a\in D}$ solves the MFG system \eqref{MFG-system-IS}.

By virtue of Proposition \ref{uniqueness-MFG-system-prop11}, any constructed solution is unique. So to conclude the proof, it remains to show that:
$$
\int_{\R\times D} |{\bf q}|  m(t, {\bf q}, a) \dd {\bf q} \dd a < \infty.
$$
For that purpose, let us set $\Psi(t):= \int_{\R\times D} |{\bf q}|^2  m(t, {\bf q}, a) \dd {\bf q} \dd a$. After differentiating $\Psi$ and integrating by parts, we obtain the following ODE that is satisfied by $\Psi$:
$$
\Psi(t) = \Psi(0) + 2 \int_0^t \int_a   \Hh_a(w) \cdot {\bf E}^a(w)  \bar m_0(\dd a) \dd w + 2 \int_0^t \int_a   \left(  \mathbb H_a(w) {\bf q} \cdot  {\bf q} \right) m(t, \dd {\bf q}, \dd  a)  \dd w,
$$
so that
$$
|\Psi(t)| \leq |\Psi(0)| + C \left\{ \sup_{0\leq w \leq T}\left\| {\bf E}^a(w) \right\|_{L_{\bar m_0}^1} + \int_0^t |\Psi(w)| \dd w \right\}
$$
holds thanks to \eqref{estimate-trading-order-flow-speed}-\eqref{estimate-riccati-matrix}.
Hence, as $m_0$ has a finite second order moment, we deduce from Gr{\"o}nwall's Lemma that for any $t\in[0,T]$
$$
\int_{\R\times D} |{\bf q}|^2  m(t, {\bf q}, a) \dd {\bf q} \dd a < \infty,
$$
which in turn entails the desired result.
\end{proof}

\subsection{Stylized Facts \& Numerical Simulations} \label{Section-num-SF}

Let us now comment our results and highlight several stylized facts of the system. 
By virtue of \eqref{ertsaz-small-u}, the optimal trading speed ${\bf v}_a^{\ast}$ is given by:
\begin{eqnarray}\label{explicit-expression-optimal-speed}
{\bf v}_a^{\ast}(t,{\bf q}) &=&  2\mathbb V\mathbb H_a(t){\bf q}+2\mathbb V\Hh_a(t) \\ \nonumber
&=& 2\mathbb V\mathbb H_a(t){\bf q}+ 2\mathbb V \int_t^T \exp\left\{\int_t^w  2 \mathbb H_a(s) \mathbb V \dd s\right\}\mathbb A \boldsymbol{\mu}_w \dd w  \\
&=:& {\bf v}_a^{1,\ast}(t, {\bf q})+ {\bf v}_a^{2,\ast}(t; \boldsymbol{\mu}). \nonumber
\end{eqnarray}
The above expression shows that the optimal trading speed is divided into two distinct parts ${\bf v}_a^{1,\ast}, {\bf v}_a^{2,\ast}$. The first part 
${\bf v}_a^{1,\ast}$ corresponds to the classical Almgren-Chriss solution in the case of a complexe portfolio (cf. \cite{Gueant}). The second part 
${\bf v}_a^{2,\ast}$ adjusts the speed based on the anticipated future average trading on the remainder of the trading window $[t,T]$. Since the matrix 
$\H_a$ is negative, note that the strategy gives more weight to the current expected average trading. Moreover, the contribution of the corrective term decreases as we approach the end of the trading horizon. The correction term aims to take advantage of the anticipated market mean field. 

Let us set  
\begin{equation}\label{definition-G}
\mathbb G_a(t,w):= \exp\left\{\int_t^w  2 \mathbb H_a(s) \mathbb V \dd s\right\}\mathbb A.
\end{equation}
Note that the matrix $\mathbb G_a$ is not necessarily symmetric and could have a different structure than $\mathbb H_a$. In view of the market price dynamics, the trading speed expression shows that an action of an individual investor or trader on asset $i$ could have a direct impact on the price of asset $j$, at least when the two assets are fundamentally correlated, i.e. $\Sigma_{i,j}\neq 0$. This phenomenon of \emph{cross impact} is related to the fact that other traders already anticipates the market mean field and aim to take advantage from that information, especially when asset $j$ is more liquid than asset $i$ (or vice versa). Thus, if an investor is trading as the crowd is expecting her to trade, then she is more likely to get a \emph{``cross-impact"} through the action of the other traders. This fact is empirically addressed in \cite{benzaquen2017dissecting, hasbrouck2001common}.

Another expression of the optimal trading speed can also be derived thanks to \eqref{expression-speed}. In fact, we have that:
\begin{equation}\label{individual-optimal-curve-MFG}
{\bf v}_a^{\ast}(t,q)  = \dot {\bf E}^{a} + 2\mathbb V\mathbb H_a(t)({\bf q}-{\bf E}^{a}).
\end{equation}
The above formulation shows that an individual investor should follow the market mean field but with a correction term which depends on the situation of her inventory relative to the population average inventory.

In order to simplify the presentation, we ignore from now on investors heterogeneity and assume that market participants have identical preferences. Under this assumption, system \eqref{reduced-equation-IS-general} simply reads:
\begin{equation}
\label{reduced-equation-IS-identical}
\left\{
\begin{aligned}
& \ddot {\bf E}=  -2\mathbb V\mathbb A \dot{\bf E} +2\g \mathbb V\Sigma{\bf E} \\ 
& {\bf E}(0)={ E}_0, \ \ \dot {\bf E}(T)+4\mathbb V {\bf A} {\bf E}(T)=0.
\end{aligned}
\right.
\end{equation} 
Given a discretization step $\d t=N^{-1}$, the solution of \eqref{reduced-equation-IS-identical} is approached by a sequence 
$\left(x_k, y_k\right)_{0\leq k \leq N}$ according to the following implicit scheme:
\begin{equation}
\nonumber
\left\{
\begin{aligned}
& x_0=E_0 \\ 
& x_k - x_{k-1}-\d t  y_{k-1}=0, \ \ k=1,...,N \\
& y_k - y_{k-1}-\d t\left( 2\g \mathbb V\Sigma x_k -2\mathbb V\mathbb A y_k \right)=0, \ \ k=1,...,N\\
& 4 \mathbb V{\bf A}x_{N}+y_N=0.
\end{aligned}
\right.
\end{equation} 
Hence, computing an approximate solution to system \eqref{reduced-equation-IS-identical} reduces to solving a straightforward linear system.
One checks that under conditions of Proposition \ref{proposition-existence-Enonlocal}, the above numerical scheme converges and is stable.

Now, we can present some examples by using the above numerical method. We consider a portfolio containing three assets (Asset 1, Asset 2, Asset 3) with the following characteristics: 
\begin{itemize}
\item  $\sigma_1=\sigma_3=0.3 \   {\$.day}^{-1/2}.{share}^{-1}$, $\sigma_2=1 \  { \$.day}^{-1/2}.{share}^{-1}$;
\item $V_1=2,000,000 \ { share.day}^{-1}$, $V_2=V_3=5,000,000 \ { share.day}^{-1}$;
\item $\eta_1=\eta_2=0.1\  { \$.share}^{-1}$, $\eta_3=0.4\  { \$.share}^{-1}$, 
$A_1=A_2=2.5 \ {\$.day}^{-1}.{share}^{-1}$;
\item $\a_1=\a_2=8 \times 10^{-4} \ {\$.share}^{-1}$, $\a_3= 6 \times 10^{-4} \ {\$.share}^{-1}$.
\end{itemize}

In Figure \ref{figure1-liquidation}-\ref{figure2-individual-player-1}, we consider a market with the initial average inventories $E_0^1=100,000$, $E_0^2=50,000$, and $E_0^3= -25,000$ shares, for Asset 1, Asset 2,  and Asset 3  respectively. In this example, we suppose that the correlation between the price increments of Asset 1 and Asset 2 is 80 \%, and we set $\g=5 \times 10^{-5} \ \mbox{\$}^{-1}$ except for Figure \ref{figure1-hedging}.

Figure \ref{figure1-liquidation} shows that changing the permanent market impact prefactors $(\alpha_k)_{1\leq k\leq 3}$ has a significant influence on the average execution speed. This fact was pointed out in \cite{cardaliaguet2016mfgcontrols}, and is essentially related to the fact that the higher the permanent market impact parameter the more the anticipated influence of the other market participants become important. Namely, when $\a_k$ is large, traders anticipate a more significant pressure on the price of Asset $k$, and adjust their trading speed. On the other hand, dynamics of Asset 2 shows that  the higher the market liquidity the faster is the execution. This is expected since the more liquid the faster assets are traded. Finally, dynamics of Asset 3 shows that traders accelerate their execution on volatile asset. It corresponds to a natural reaction due to risk aversion; a trader will try to reduce his exposure to the \emph{more risky} (hence volatile) assets in priority.

\begin{figure}[hbtp] 
\begin{center}
        \subfigure[Market mean field with different parameters]{%
            \label{figure1-liquidation}
            \includegraphics[width=0.5\textwidth]{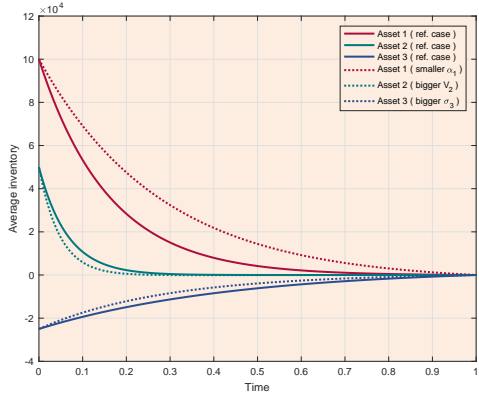}
        }%
        \subfigure[Optimal trading of an individual investor with: $q_0^1=40, 000$,  $q_0^2=0$, and $q_0^3=110, 000$]{%
           \label{figure2-individual-player}
           \includegraphics[width=0.5\textwidth]{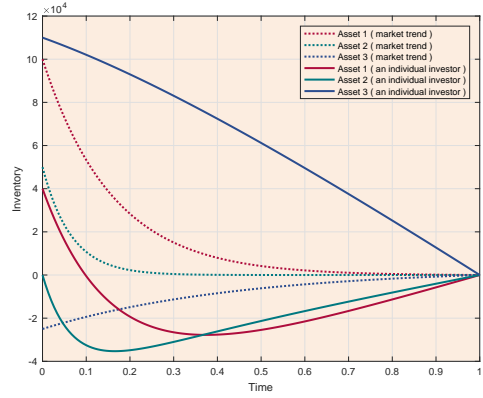} 
        }\\ 
        \subfigure[Market mean field with high risk aversion]{%
            \label{figure1-hedging}
            \includegraphics[width=0.5\textwidth]{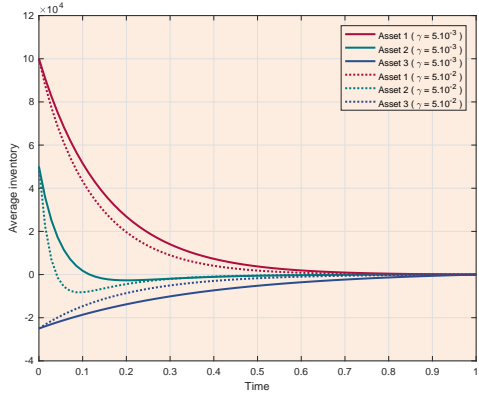}
        }%
        \subfigure[Optimal trading of an individual investor with: $q_0^1=100, 000$, and $q_0^2=q_0^3=0$  ]{%
            \label{figure2-individual-player-1}
            \includegraphics[width=0.5\textwidth]{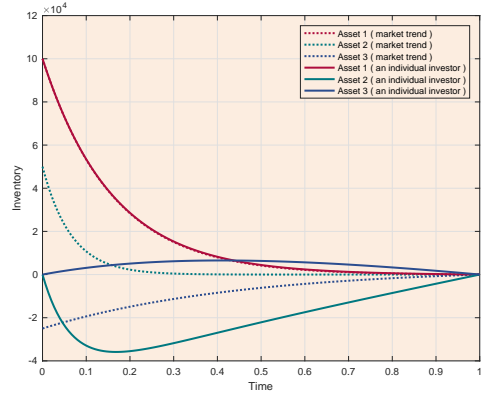}
        }%
        \end{center}
        \caption{%
        Simulated examples of the dynamics of ${\bf E}$ and optimal trading curves of an individual investor. The dashed lines in Figure \ref{figure1-liquidation} correspond to:  $\a_1=6 \times 10^{-4} \ {\$.share}^{-1}$, $V_2=7,000,000 \ { share.day}^{-1}$, and $\sigma_3= 5 \  { \$.day}^{-1/2}.{share}^{-1}$.
     }%
\end{figure} 

Figure \ref{figure1-hedging} illustrates the behavior of the crowd of investors with an increasing risk aversion (higher $\g$). In the two presented scenarios, one can observe that Asset 2 is liquidated very quickly, then a short position is built (around $t=0.05$ for $\g=5 \times10^{-2}\ \mbox{\$}^{-1}$) and it is finally progressively unwound. This exhibits the emergence of a Hedging Strategy: indeed, since Asset 1 and Asset 2 are highly correlated, investors can slow down the execution process for the less liquid asset (Asset 1) to reduce the transaction costs, by using the more liquid asset (Asset 2) to hedge the market risk associated to Asset 1. The trader has an incentive to use such a strategy as soon as the cost of the roundtrip in Asset 2 is smaller than the corresponding reduction of the risk exposure (seen from its reward function $U^ a(t,x,{\bf s},{\bf q}; \boldsymbol{\mu})$ defined by equality (\ref{equation-IS})).

Now, we provide examples of individual players' optimal strategies. We consider two examples: an individual investor with initial inventory 
$q_0^1=40, 000$, $q_0^2=0$, and  $q_0^3=110, 000$ in Figure \ref{figure2-individual-player}; and an individual investor with initial inventory 
$q_0^1=100, 000$, and $q_0^2=q_0^3=0$ in Figure \ref{figure2-individual-player-1}.

In Figure \ref{figure2-individual-player-1} the considered investor starts from $q_0^1=E_0^1$. Hence, by virtue of \eqref{individual-optimal-curve-MFG} her liquidation curve follows exactly the market mean field. Moreover, the investor takes advantage of the anticipated evolution of the market by building favorable positions on Asset 2 and Asset 3: building a short (resp. a long) position on Asset 2 (resp. Asset 3), and buying (resp. selling) back in order to take advantage of price drop (resp. raise) induced by the massive liquidation (resp. purchase). The trading strategies on Asset 2 and Asset 3 are related to the term ${\bf v}^{2,\ast}$ in \eqref{explicit-expression-optimal-speed}. This strategy can be described as a ``Liquidity Arbitrage Strategy''. 

Figure \ref{figure2-individual-player} shows two interesting facts: on the one hand, the individual player builds a short position on Asset 1 after achieving her goal (complete liquidation) in order to take advantage of the market selling pressure; on the other hand, by taking into account the market buying pressure on Asset 1, the investor slows down her liquidation to reduce execution costs since she anticipates no  sustainable price decline.

\section{The Dependence Structure of Asset Returns}\label{cov-matrix-secction}

The main purpose of this section is to analyze the impact of large transactions on the observed covariance matrix between asset returns, by using the Mean Field Game framework of Section \ref{first-section}. For that purpose, we assume a simple model where a continuum of players trade a portfolio of assets on each day, and where the initial distribution of inventories  across the investors $m_0$ changes randomly from one day to another according to some given law of probability. We assume that the price dynamics is given by \eqref{price-dynamics}, and we consider the problem of estimating the covariance matrix of asset returns given a large dataset of  intraday observations of the price. For the sake of simplicity, we ignore investors heterogeneity and assume that market participants have identical preferences. Next, we compare our findings with an empirical analysis on a pool of 176 US stocks sampled every 5 minutes over year 2014 and calibrate our model to market data.

Throughout this section, we denote by $\left<X^2 \right>$ the variance of $X$, and $\left<X,Y \right>$ the
 covariance between $X$ and $Y$, for any two random variables $X,Y$. Moreover, we will call a ``bin'' a slice of 5 minutes. We focused on continuous trading hours because the mechanism of call auctions (i.e. opening and closing auctions is specific). Since US markets open from 9h30 to 16h, our database has 78 bins per day. They will be numbered from 1 to $M$ and indexed by $k$.
 
\subsection{Estimation using Intraday Data} \label{COV-sect1}

We suppose that $E_0$ is a random variable with a given realization on each trading period $[0,T]$, where  $T=1$ day (trading day); and we consider the problem of estimating the covariance matrix of asset returns given the following observations of the price: 
$$
\left\{ \left( {\bf S}_{t_{1,1}}^n,...,{\bf S}_{t_{1,M}}^n  \right) , \left( {\bf S}_{t_{2,1}}^n,...,{\bf S}_{t_{2,M}}^n  \right), ....,  
\left( {\bf S}_{t_{N,1}}^n,...,{\bf S}_{t_{N,M}}^n  \right) \right\}, \quad n=1,...,d
$$
where ${\bf S}_{t_{\ell,k}} ^n$ is the price of asset $n$ in bin $k$ of day $\ell$. We suppose that $t_{\ell,1}=0$, $t_{\ell,M}=T$, for any $1\leq \ell \leq N $, and  $t_{\ell,k}= t_{\ell',k}=t_k$ for any $1\leq k \leq M $,  $1\leq \ell, \ell' \leq N$.

For simplicity, we suppose that the covariance matrix of asset returns between $t_k$ and $t_{k+1}$ is estimated form data by using the following ``naive" estimator :
\begin{equation}\label{covariance-standard-estimation}
C_{[t_k, t_{k+1}]}^{i,j} := \frac{1}{N-1} \sum_{l=1}^N \left( \d S^{i,k,l} -  \overline{\d S}^{i,k}   \right) 
\left( \d S^{j,k,l} -  \overline{\d S}^{j,k}    \right) ,
\end{equation}
where $\d S^{n,k, l} = S_{t_{l, k+1}}^n-S_{t_{l, k}}^n$ and 
$
\overline{\d S}^{n,k}= N^{-1} \sum _{l=1}^N \d S^{n,k, l}
$, $n=i,j$.
We define the correlation matrix as follows:
\begin{equation}\label{correlation-standard-estimation}
R_{[t_k, t_{k+1}]}^{i,j} :=\frac{C_{[t_k, t_{k+1}]}^{i,j}}{\left(C_{[t_k, t_{k+1}]}^{i,i}C_{[t_k, t_{k+1}]}^{j,j} \right)^{1/2}}.
\end{equation}
Suppose that the price dynamics is given by \eqref{price-dynamics}, then the following proposition provides an exact computation of $C_{[t_k, t_{k+1}]}^{i,j}$.
\begin{proposition}\label{proposition5.4-excess-covariance}
Assume that $E_0$ is independent from the process $({\bf W}_t)_{t\in[0,T]}$, then for any $1\leq k \leq M-1$ and $1\leq i,j \leq d$, the following hold:
\begin{equation} \label{asset-returns-covariation}
C_{[t_k, t_{k+1}]}^{i,j} = (t_{k+1}-t_{k})\Sigma_{i,j} + \a_i \a_j \frac{\eta_i \eta_j}{4 V_i V_j}  \Lambda_k^{i,j} + \e_N,
\end{equation}
where  $\e_N \to 0$ as $N\to \infty$, 
\begin{multline*}
\Lambda_k^{i,j} :=  \sum_{1\leq \ell,\ell' \leq d} \left< \theta_k^{i,\ell}, \theta_k^{j,\ell'} \right> +  \sum_{1\leq \ell,\ell' \leq d} \left< \pi_k^{i,\ell}, \theta_k^{j,\ell'} \right> \\ +  \sum_{1\leq \ell,\ell' \leq d} \left< \theta_k^{i,\ell}, \pi_k^{j,\ell'} \right>
+  \sum_{1\leq \ell,\ell' \leq d} \left< \pi_k^{i,\ell}, \pi_k^{j,\ell'} \right>,
\end{multline*}
and
$$
\pi_k^{n,\ell} := \int_{t_k}^{t_{k+1}} \mathbb H^{n,\ell}(s) E^{\ell}(s) \dd s, \quad \theta_k^{n,\ell} := 
\int_{t_k}^{t_{k+1}} \int_{s}^T \mathbb G^{n, \ell}(s,w) \mu^{\ell}(w)\dd w\dd s. 
$$
\end{proposition}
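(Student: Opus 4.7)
My strategy is to pass to the limit in the sample covariance, split the price increments into a Brownian piece and a drift piece induced by the MFG flow, and then exploit the quadratic ansatz to identify the drift contribution with $\Lambda_k^{i,j}$.

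\textbf{Step 1: Law of large numbers.} Since $E_0$ is assumed independent across days with a common law, the $N$ observed path-slices $(\delta S^{n,k,\ell})_{\ell=1}^{N}$ are i.i.d.\ in $\ell$ for each fixed asset $n$ and bin $k$. Bounds on the MFG solution already obtained via Proposition \ref{proposition-existence-Enonlocal} (in particular \eqref{estimate-trading-order-flow-speed}) yield $L^{2}$-integrability of $\int_{t_k}^{t_{k+1}}\mu_s^n\,\dd s$, so a standard argument for the sample covariance gives
\[
C_{[t_k,t_{k+1}]}^{i,j}\;=\;\mathrm{Cov}\bigl(\delta S^{i,k},\delta S^{j,k}\bigr)+\varepsilon_N,
\qquad \varepsilon_N\xrightarrow[N\to\infty]{}0.
\]
It then suffices to compute $\mathrm{Cov}(\delta S^{i,k},\delta S^{j,k})$ exactly.

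\textbf{Step 2: Splitting Brownian and drift contributions.} From \eqref{price-dynamics},
\[
\delta S^{n,k}\;=\;\sigma_n\bigl(W_{t_{k+1}}^{n}-W_{t_k}^{n}\bigr)\;+\;\alpha_n\int_{t_k}^{t_{k+1}}\mu_s^n\,\dd s,
\]
where the first summand is $\sigma(({\bf W}_t)_{t\in[0,T]})$-measurable and the second depends only on $E_0$ (because, for each realisation of $E_0$, the trading flow $\boldsymbol{\mu}$ is the deterministic MFG output of Proposition \ref{proposition-existence-Enonlocal}). Independence of $E_0$ and ${\bf W}$, together with $\mathbb E[W_{t_{k+1}}^n-W_{t_k}^n]=0$, makes every cross-covariance between a Brownian increment and a function of $E_0$ vanish. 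Therefore
\[
\mathrm{Cov}\bigl(\delta S^{i,k},\delta S^{j,k}\bigr)\;=\;(t_{k+1}-t_k)\Sigma_{i,j}\;+\;\alpha_i\alpha_j\,\mathrm{Cov}\!\left(\int_{t_k}^{t_{k+1}}\mu_s^i\,\dd s,\;\int_{t_k}^{t_{k+1}}\mu_s^j\,\dd s\right),
\]
with the diffusion term coming from $\langle\sigma_i W^i,\sigma_j W^j\rangle_{t_{k+1}}-\langle\sigma_i W^i,\sigma_j W^j\rangle_{t_k}=(t_{k+1}-t_k)\Sigma_{i,j}$.

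\textbf{Step 3: Identifying the drift covariance with $\Lambda_k^{i,j}$.} Under the quadratic-liquidity ansatz \eqref{ertsaz-small-u} and the formula $v^{i,a}=V_i\dot H_i(\partial_{q_i}u^a)=(V_i/(2\eta_i))\partial_{q_i}u^a$, the identical-preferences aggregation of Section~\ref{Quadratic Liquidity Functions} gives
\[
\mu_s^n\;=\;\frac{V_n}{2\eta_n}\Bigl(\Hh^{\,n}(s)+\sum_{\ell=1}^{d}\mathbb H^{n,\ell}(s)E^{\ell}(s)\Bigr),
\]
and by the representation of $\Hh$ proved just after \eqref{definition-G},
\[
\Hh^{\,n}(s)\;=\;\sum_{\ell=1}^{d}\int_{s}^{T}\mathbb G^{n,\ell}(s,w)\,\mu^{\ell}(w)\,\dd w.
\]
Integrating over $[t_k,t_{k+1}]$ recognises exactly the quantities $\pi_k^{n,\ell}$ and $\theta_k^{n,\ell}$ from the statement, so that $\int_{t_k}^{t_{k+1}}\mu_s^n\,\dd s$ is a linear combination (with a constant prefactor depending only on $V_n,\eta_n$) of the $\pi_k^{n,\ell}+\theta_k^{n,\ell}$. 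Expanding the covariance bilinearly across the sums in $\ell,\ell'$ yields the four groups of terms defining $\Lambda_k^{i,j}$ and produces the prefactor stated in \eqref{asset-returns-covariation}.

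\textbf{Main obstacle.} The only genuinely delicate point is Step~1: one must justify that the sample covariance, built from quantities whose laws inherit the heavy structure of the MFG flow, does converge to the covariance of $\delta S^{n,k}$ as $N\to\infty$. This requires second-moment bounds on $\int_{t_k}^{t_{k+1}}\mu_s^n\,\dd s$ that are uniform in the realisation of $E_0$, which is where Proposition~\ref{proposition-existence-Enonlocal} (together with an assumed second moment of the law of $E_0$) is essential; once this is in place, Steps~2 and~3 are direct bilinear computations.
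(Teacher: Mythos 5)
Your argument follows essentially the same route as the paper's own proof: apply the law of large numbers and the independence of $E_0$ from $({\bf W}_t)_t$ to reduce the sample covariance to $(t_{k+1}-t_k)\Sigma_{i,j}$ plus the empirical covariance of the integrated drifts, then substitute the MFG decomposition $\boldsymbol{\mu}_t = 2\mathbb V\mathbb H(t){\bf E}(t)+2\mathbb V\int_t^T\mathbb G(t,w)\boldsymbol{\mu}_w\,\dd w$ and expand bilinearly to recognise the four families of $\pi$--$\theta$ covariances making up $\Lambda_k^{i,j}$. The decomposition, the role of independence, and the identification of the prefactor coming from $2\mathbb V$ are all exactly as in the paper, so the proposal is correct and not materially different.
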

\begin{proof}
Use the exact expression of the price dynamics \eqref{price-dynamics}, the law of large numbers, and the independence between $E_0$ and $({\bf W}_t)_{t\in[0,T]}$ to obtain:
\begin{multline}\label{fundamental-relation-covariance}
C_{[t_k, t_{k+1}]}^{i,j} = \e_N + (t_{k+1}-t_{k})\Sigma_{i,j} \\
+ \a_i \a_j {1\over (N-1)}\sum_{l=1}^{N} \int_{t_k}^{t_{k+1}} \left( \mu_s^{i,l}  - \bar{\mu}_s^{i} \right)  \dd s    \int_{t_k}^{t_{k+1}} \left(  \mu_{s'}^{j,l}  - \bar{\mu}_{s'}^{j} \right) \dd s',  
\end{multline}
where $ \bar{\mu}_u^{n} = N^{-1} \sum_{l=1}^N  \mu_u^{n,l} $, and $\boldsymbol{\mu}^l, {\bf E}^l$ are respectively the realizations of $\boldsymbol{\mu}, {\bf E}$ in day $l$. Now, owing to \eqref{explicit-expression-optimal-speed}-\eqref{definition-G}, we know that
$$
\boldsymbol{\mu}_t^l = 2\mathbb V\mathbb H(t){\bf E}^l(t)
+2\mathbb V\int_t^T\mathbb G(t,w) \boldsymbol{\mu}^l_w \dd w\\
=: \boldsymbol{\nu}^{1,l}(t) + \boldsymbol{\nu}^{2,l}(t).
$$
Thus by setting 
$$ 
\tilde {\boldsymbol{\nu}}_k^{n,l} :=  \int_{t_k}^{t_{k+1}} \left(\boldsymbol{\nu}^{n,l}(s)   -  
N^{-1} \sum_{l=1}^N  \boldsymbol{\nu}^{n,l}(s) \right)   \dd s, \quad n=1,2,
$$
we deduce that
$$
\int_{t_k}^{t_{k+1}} \left( \mu_s^{i,\ell}  - \bar{\mu}_s^{i} \right)  \dd s    \int_{t_k}^{t_{k+1}} \left(  \mu_{s'}^{j,\ell}  - \bar{\mu}_{s'}^{j} \right) \dd s'
= \left( \tilde{\nu}_k^{1,l,i} +  \tilde{\nu}_k^{2,l,i} \right) \left( \tilde{\nu}_k^{1,l,j} + \tilde{\nu}_k^{2,l,j}  \right).
$$
The desired result ensues by noting the existence of estimation noises $\e^1_N, \e^2_N, \e^3_N$ and $\e^4_N$, such that:
\begin{eqnarray*}
(N-1)^{-1}\sum_{l=1}^{N} \tilde{\nu}_k^{1,l,i}\tilde{\nu}_k^{1,l,j} &=& \frac{\eta_i \eta_j}{4 V_i V_j} \sum_{1\leq \ell,\ell' \leq d} \left< \pi_k^{i,\ell}, \pi_k^{j,\ell'} \right> + \e^1_N;\\
(N-1)^{-1}\sum_{l=1}^{N} \tilde{\nu}_k^{2,l,i}\tilde{\nu}_k^{2,l,j} &=& \frac{\eta_i \eta_j}{4 V_i V_j} \sum_{1\leq \ell,\ell' \leq d} \left< \theta_k^{i,\ell}, \theta_k^{j,\ell'} \right> + \e^2_N;\\
(N-1)^{-1}\sum_{l=1}^{N} \tilde{\nu}_k^{1,l,i}\tilde{\nu}_k^{2,l,j} &=& 
\frac{\eta_i \eta_j}{4 V_i V_j} \sum_{1\leq \ell,\ell' \leq d} \left< \pi_k^{i,\ell}, \theta_k^{j,\ell'} \right>+\e^3_N;\\
(N-1)^{-1}\sum_{l=1}^{N} \tilde{\nu}_k^{1,l,j}\tilde{\nu}_k^{2,l,i} &=& \frac{\eta_i \eta_j}{4 V_i V_j} \sum_{1\leq \ell,\ell' \leq d} \left< \theta_k^{i,\ell}, \pi_k^{j,\ell'} \right> 
+ \e^4_N.
\end{eqnarray*}
The proof is complete.
\end{proof}

\begin{remark}
One can easily derive an analogous result  for $\left(C_{[0,T]}^{i,j} \right)_{1\leq i,j \leq d}$. Namely, it holds that:
\begin{equation} \label{asset-returns-covariation-[0,T]}
C_{[0,T]}^{i,j} = T\Sigma_{i,j} + \a_i \a_j\frac{\eta_i \eta_j}{4 V_i V_j}  \Lambda^{i,j} + \e_N,
\end{equation}
where  $\e_N \to 0$ as $N\to \infty$, 
\begin{multline*}
\Lambda^{i,j} :=  \sum_{1\leq \ell,\ell' \leq d} \left< \theta^{i,\ell}, \theta^{j,\ell'} \right> +  \sum_{1\leq \ell,\ell' \leq d} \left< \pi^{i,\ell}, \theta^{j,\ell'} \right>\\ +  \sum_{1\leq \ell,\ell' \leq d} \left< \theta^{i,\ell}, \pi^{j,\ell'} \right>,
+  \sum_{1\leq \ell,\ell' \leq d} \left< \pi^{i,\ell}, \pi^{j,\ell'} \right>,
\end{multline*}
and
$$
\pi^{n,\ell} := \int_{0}^{T} \mathbb H^{n,\ell}(s) E^{\ell}(s) \dd s, \quad \theta^{n,\ell} := 
\int_{0}^{T} \int_{s}^T \mathbb G^{n, \ell}(s,w) \mu^{\ell}(w)\dd w\dd s. 
 $$
\end{remark}

Identities \eqref{asset-returns-covariation} and \eqref{asset-returns-covariation-[0,T]} show that the realized covariance matrix is the sum of the fundamental covariance and an \emph{excess realized covariance matrix} generated by the impact of the crowd of institutional investors' trading strategies. Note on the one hand that the diagonal terms $C^{i,i}$ are always deviated from fundamentals because of the contribution of $\left<(\pi^{i,i})^2\right>$ and $\left<(\theta^{i,i})^2\right>$. On the other hand, since $\mathbb H$ and $\mathbb G$ inherit a structure similar to $\Sigma$, the excess of realized covariance in the off-diagonal terms is non-zero as soon as one -- or both -- of the conditions below is satisfied:
\begin{itemize}
\item there exists $i_0 \neq j_0$ such that $\Sigma_{i_0,j_0} \neq 0$;
\item there exists $i_0 \neq j_0$ such that $\left<E_0^{i_0}, E_0^{j_0} \right>\neq 0$.
\end{itemize}

Moreover, \eqref{asset-returns-covariation} and \eqref{asset-returns-covariation-[0,T]} show that the excess realized covariance can deviate significantly from fundamentals when: \emph{the market impact is large (high crowdedness), the considered assets are highly liquid (small $\eta_i/V_i$), the risk aversion coefficient $\g$ is high, or when the standard deviation of $E_0$ is large}. In addition, since the contribution of $\theta_k^{n,\ell}$ and $\pi_k^{n,\ell}$ vanishes as we approach the end of the trading horizon, observe that  
\begin{equation}\label{asymptotic-covariance}
C_{[t_k, t_{k+1}]}^{i,j} \sim (t_{k+1}-t_{k})\Sigma_{i,j}, \quad \mbox{ as } t_{k+1} \to T,
\end{equation}
which means that one converges to market fundamentals at the end of the trading period.
This is due to the fact that, in our model, all traders have high enough risk aversions so that their trading speeds go to zero close to the terminal time $T$.

By virtue of \eqref{asset-returns-covariation}, one can also explain the realized correlation matrix in terms of the fundamental correlations $\rho_{i,j}:= \Sigma_{i,j}/(\Sigma_{i,i}\Sigma_{j,j})^{1/2}$. Namely, it holds that:
\begin{eqnarray} \label{asset-returns-correlation}
\\ \nonumber
R_{[t_k, t_{k+1}]}^{i,j} &=& \rho_{i,j} \left(  \frac{(t_{k+1}-t_{k})^2 \Sigma_{i,i}\Sigma_{j,j}}{C_{[t_k, t_{k+1}]}^{i,i}C_{[t_k, t_{k+1}]}^{j,j}}   \right)^{1/2} + \frac{\a_i \a_j \eta_i \eta_j \Lambda_k^{i,j} }{4V_i V_j\left(C_{[t_k, t_{k+1}]}^{i,i}C_{[t_k, t_{k+1}]}^{j,j} \right)^{1/2}} + \e_N\\ \nonumber
&=:& \rho_{i,j} A_k^{i,j} + B_k^{i,j} + \e_N
\end{eqnarray}
for any $1\leq i < j \leq d$. This expression shows that the deviation of the realized correlation from fundamentals is a linear map. The numerator of the multiplicative part $A_k^{i,j}$ does not depend on the off-diagonal terms of $\mathbb H$ while it is the case for the additive part $B_k^{i,j}$.
\subsection{Numerical Simulations} \label{COV-sect2}
In this part, we conduct several numerical experiments in order to illustrate the impact of trading on the structure of the covariance matrix of asset returns.

\begin{figure}[thbp] 
\begin{center}
        \subfigure[Intraday correlation]{%
            \label{figure1-realized-intraday-correlation}
            \includegraphics[width=0.5\textwidth]{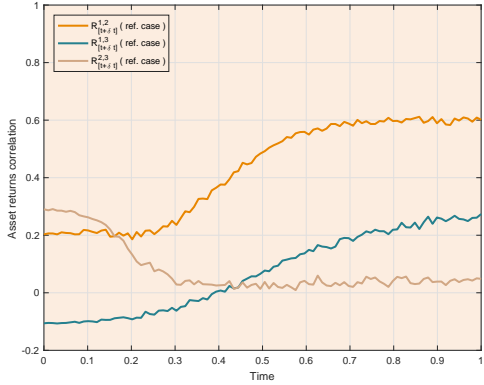}
        }%
        \subfigure[Intraday volatility for $\la=10^{4}$ and $\la=6.10^{3}$]{%
           \label{figure2-sensitivity-E_0}
           \includegraphics[width=0.5\textwidth]{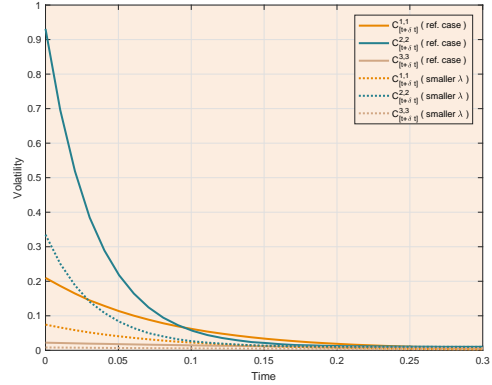} 
        }\\ 
        
        \subfigure[Intraday correlation for $\la=10^{3}$]{%
            \label{figure3-realized-intraday-correlation}
            \includegraphics[width=0.5\textwidth]{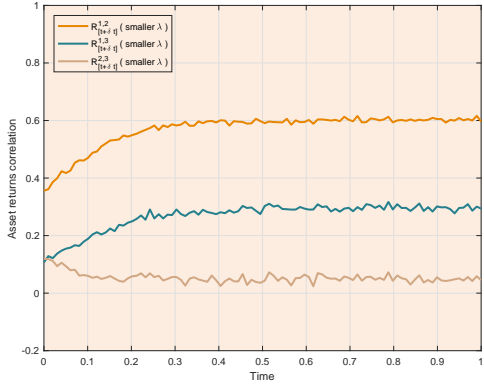}
        }%
         \subfigure[Intraday covariance for $\la=10^{4}$ and $\la=6.10^{3}$]{%
           \label{figure4-sensitivity-beta}
           \includegraphics[width=0.5\textwidth]{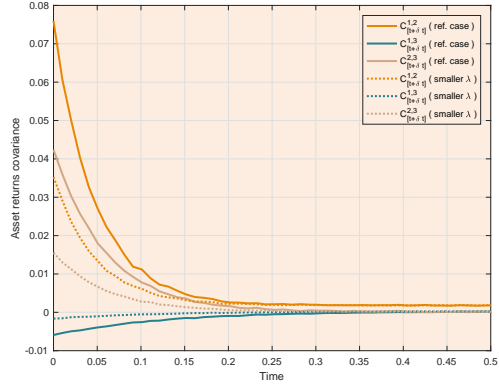} 
        }
        \end{center}
          \caption{%
        Simulated examples of intraday covariance and correlation matrices using \eqref{price-dynamics}.}
        \end{figure}    

We consider the example of Section \ref{Section-num-SF} by choosing $\rho_{1,2}=60 \%$, $\rho_{1,3}=30 \%$ and $\rho_{2,3}=5\%$. For  simplicity, we suppose that $E_0$ is a centered Gaussian random vector with a covariance matrix $\Gamma$ that is given by:
$$
\Gamma := \lambda^2.
 \left(
\begin{array}{ccc}
1 & 0.2 & -0.1 \\
0.2 & 1 & 0.3 \\
-0.1 & 0.3 & 1
\end{array}
\right),
$$
where $\lambda =10,000 \ share$. We fix a time step $\d t = 10^{-2} \ day$ ($\sim 4 \ min$), set $t_{k+1}-t_k = \d t$, and estimate $\left(C_{[t_k,t_k+\d t]}^{i,j} \right)_{\substack{1\leq i\leq j \leq 3 \\ 1\leq k \leq M-1}}$ and $\left(R_{[t_k,t_k+\d t]}^{i,j} \right)_{\substack{1\leq i<j \leq 3 \\ 1\leq k \leq M-1}}$ by generating a sample of $N=10,000$ observations using the numerical method of Section \ref{Section-num-SF}.

Figures \ref{figure1-realized-intraday-correlation}-\ref{figure4-sensitivity-beta} show that the observed covariance and correlation matrices are significantly deviated from fundamentals and especially at the beginning of the trading day. Figures \ref{figure2-sensitivity-E_0}-\ref{figure4-sensitivity-beta} also illustrates the sensitivity of the deviation relative to the change of the standard deviation of initial inventories: as $\lambda$ diminishes, the impact of trading is lower and the covariance and correlation matrices converge toward fundamentals.
 
On the other hand, we observe that the beginning of the trading period is dominated by the dependence structure of initial inventories. This is due to the domination of the additive terms $(B_k^{i,j})_{1\leq i<j \leq 3}$; in fact, given the relative high magnitude of denominator terms, $(A_k^{i,j})_{1\leq i<j \leq 3}$ are very small when $t_k \to 0$. Furthermore, we note that all the observed quantities converge toward fundamentals at the end of the trading period, which is in line with \eqref{asymptotic-covariance}.
 
  \subsection{Empirical Application}\label{section33-Empirical}
 
  Now, we carry out an empirical analysis on a pool of $d=176$ stocks. The data consists of five-minute binned trades ($\d t = 5$ min) and quotes information from January 2014 to December 2014, extracted from the primary market of each stock (NYSE or NASDAQ). \emph{We only focus on {the beginning of} the continuous trading session removing  30 min after the open and the last {90} min before the close,} in order to avoid  the particularities of trading activity in these periods {and target close strategies}.  Thus, the number of days is $N=252$ and the number of bins per day is $M={55}$. Days will be labelled by $l =1,...,N$, bins by $k = 1,..., M$, and for simplicity we note $C_k^{i,j}$ instead of $C_{[t_k-\d t, t_k]}^{i,j}$ for any $1\leq i, j \leq d$.
  
 {Our goal is to empirically assess the the influence of trading activity on the intraday covariance matrix of asset returns, and then compare the obtained models with our previous theoretical observations.} Given our analysis in Sections \ref{COV-sect1} and \ref{COV-sect2}, we expect an excess in the observed covariance matrix of asset returns and especially at the beginning of the trading period. Moreover, we expect the magnitude of this effect to be an increasing function of the typical size of market orders as it is noticed in Figures \ref{figure2-sensitivity-E_0} and \ref{figure4-sensitivity-beta}. 


\subsubsection{Market Impact} Let us start by assessing the relationship between the intraday {variance} of asset returns and the intraday {variance} of net exchanged flows $(F_k^{i,i})_{1\leq i\leq d}$, that is defined by:
$$
F_k^{i,i} := \frac{1}{N-1}  \sum_{l=1}^N \left( \nu_{k,l}^i -  \bar{\nu}_{k}^{i}   \right) 
\left( \nu_{k,l}^i -  \bar{\nu}_{k}^{i}   \right)
$$
for any $1\leq i \leq d$ and $k=1,...,M$; where $\nu_{k,l}^i$ is the net sum of exchanged volumes between $t_k- \d t$ and $t_k$ for stock $i$ in day $l$, and $\bar{\nu}_{k}^{i} = N^{-1} \sum_{l=1}^N \nu_{k,l}^i$ (i.e. $\bar{\nu}_{k}^{i}$ is an estimate of the expectation of $\nu_{k,l}^i$ regardless of the day).
{ As a by-product, we obtain estimates for the permanent market impact factors}.  Though $\nu$ does not represent exactly the same quantity as the variable $\mu$ of Section \ref{COV-sect1}, both variables are an indicator of market order flows {and for simplicity we shall use $\nu$ as a proxy for $\mu$.}

\begin{figure}[thbp] 
\begin{center}
        \subfigure[GOOG]{%
            \label{figure1-GOOG}
            \includegraphics[width=0.5\textwidth]{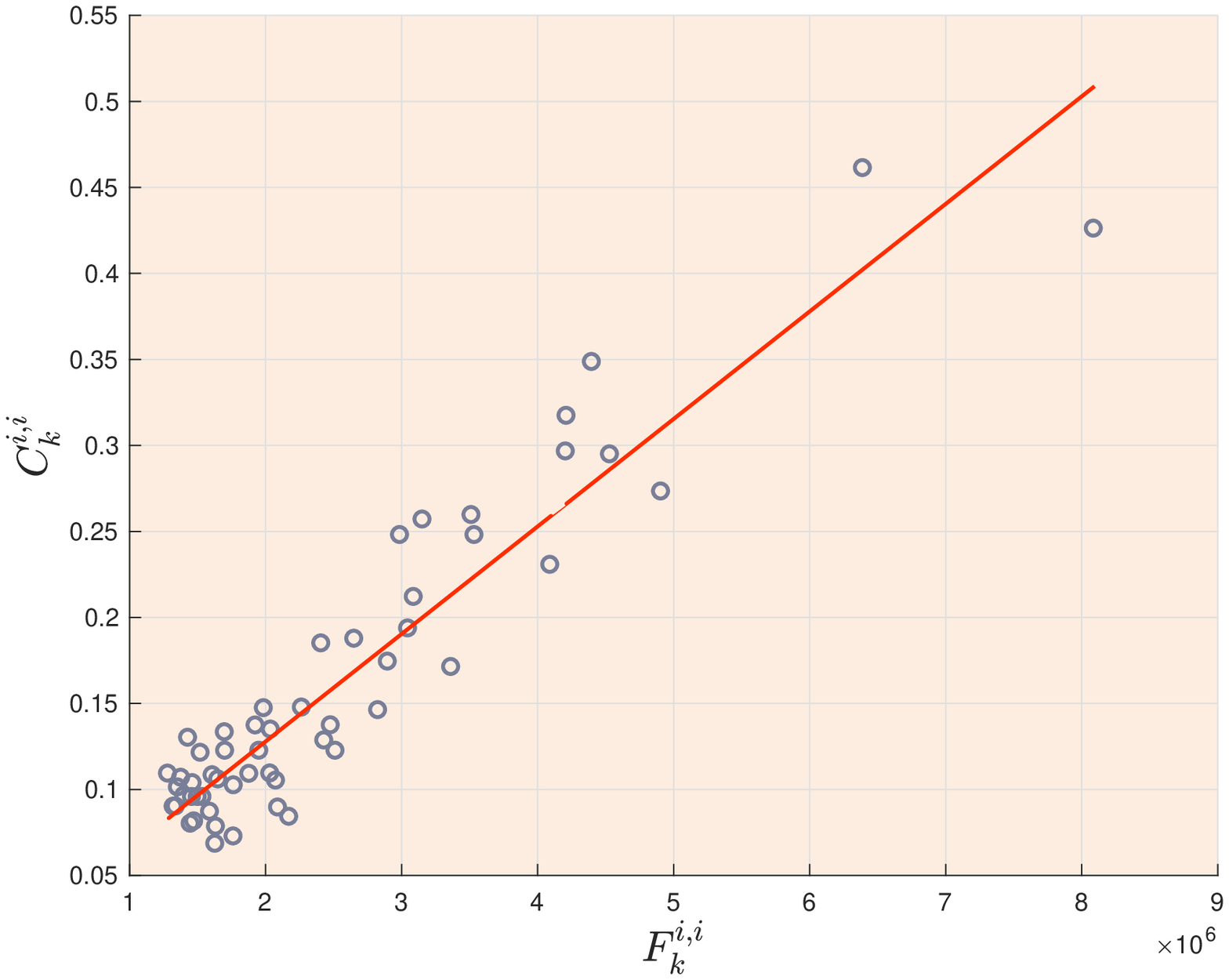}
        }%
        \subfigure[Histogram of ${\rm Corr} \left(C,F \right)$]{%
           \label{figure2-GOOG/AAPL}
           \includegraphics[width=0.5\textwidth]{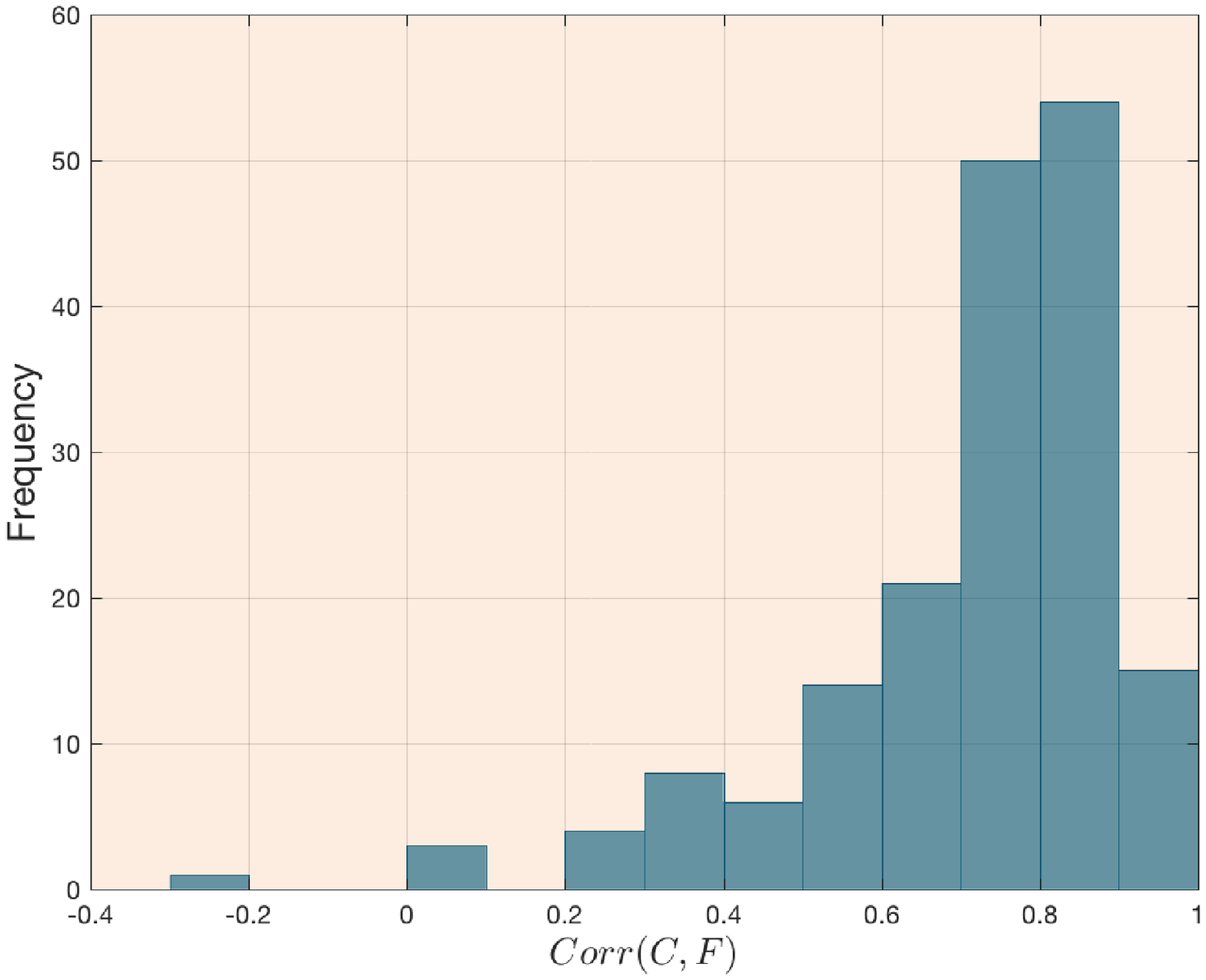} 
        }\\ 
   
        \end{center}
          \caption{%
         {Dependence structure between $(C_k^{i,i})_{1\leq k \leq M}$ and $(F_k^{i,i})_{1\leq k \leq M}$. Figure \ref{figure1-GOOG} displays the relationship between $(C_k^{i,i})_{1\leq k \leq M}$ and $(F_k^{i,i})_{1\leq k \leq M}$ for $GOOG$. Figure \ref{figure2-GOOG/AAPL} exhibits the histogram of correlations denoted ${\rm Corr}\left(C,F \right)$.}}
        \end{figure}    

{Figure \ref{figure1-GOOG}  shows a strong positive correlation between $(C_k^{i,i})_{1\leq k \leq M}$ and $(F_k^{i,i})_{1\leq k \leq M}$ for $GOOG$. Figure \ref{figure2-GOOG/AAPL} shows that this is true for almost all the stocks} and reinforces our findings in Sections \ref{COV-sect1} and \ref{COV-sect2}. Furthermore, as  \eqref{fundamental-relation-covariance} suggests, we suppose a linear relationship between $(C_k^{i,i})_{1\leq k \leq M}$ and $(F_k^{i,i})_{1\leq k \leq M}$; thus for every $1\leq i \leq d$ we fit the following regression:
\begin{equation}\label{market_impact_estimation_model}
C^{i,i} = \e  + \d t\cdot \Sigma + \a^2 \cdot F^{i,i}
\end{equation}
where $\e$ is the error term (assumed normal), the coefficient $\Sigma$ is related to the ``fundamental" covariance matrix of asset returns and the square root of the coefficient $\a^2$  is homogeneous to the market impact factor (cf. \eqref{fundamental-relation-covariance}). In Table \ref{table1-numbers} we show  estimates of $\a$, $\Sigma$ and the correlation between $(C_k^{i,i})_{1\leq k \leq M}$ and $(F_k^{i,i})_{1\leq k \leq M}$ (denoted ${\rm Corr}\left(C,F \right)$) for several examples. In particular, we obtain estimates for the permanent market impact  $\widehat{\a}$.

\begin{table}[thbp] 
\begin{center}
\setlength{\tabcolsep}{0.5em} 
{\renewcommand{\arraystretch}{1.5}%
\begin{tabular}{l l l l l l l}
  \hline 
    &  AAPL &  BMRN & GOOG & INTC  \\
  \hlineB{3} 
  $\widehat{\a}\ (bp) $&  $\bf 0.8\quad \ \ $ &  $\bf 8.43\quad \ \ $ & $\bf 2.5 \quad \ \ $ & $\bf 0.01$    \\
  $\widehat{\a^2} $&  $6.41 \times 10^{-11}\quad \ \ $ &  $7.11 \times 10^{-9}\quad \ \ $ & $6.25 \times 10^{-8} \quad \ \ $ & $1.79 \times 10^{-12}$    \\
   std.  &     $(4.15 \times 10^{-12}) $    &  $(3.98 \times 10^{-10})$ & $(3.17 \times 10^{-9}) $     & $(1.58 \times 10^{-13})$  \\
    p-value &  $0.01\%$ & $0.01\%$ & $0.01 \%$ &  $0.01\%$ \\\hline
  $ \widehat\Sigma$ &  $\bf 0.16$ & ${ -0.01}$ & ${ 0.15}$ &  $ 5.5 \times 10^{-3}$ \\
     std.  &   $(0.05)$    & $(0.05)$ & $(0.49)$      & $(2 \times 10^{-4})$ \\
    p-value    & $0.01\%$ & $60\%$ &  $75\%$ &  $2\%$ \\\hline
 ${\rm Corr}\left(C,F\right) \quad \ $ &  $90\%$ & $92\%$ & $94\%$ & $84\%$ \\      
  \hline
   \end{tabular}
   }
   \end{center}
   \vspace{0.2cm}
   \caption{%
        Estimates for $\a$, $\Sigma$ and the realized correlation between $(C_k^{i,i})_{1\leq k \leq M}$ and $(F_k^{i,i})_{1\leq k \leq M}$ for Nasdaq stocks. For each estimate the standard deviation (std.) is shown in parentheses and the p-value is given in the third row. Numbers in {\bf bold} are significant at a level of at least 99\%.}
        \label{table1-numbers}
\end{table}

\subsubsection{The Typical Intraday Pattern}
Next, we are interested in the intraday evolution of the diagonal and off-diagonal terms of the covariance matrix of returns, and in the way this evolution is affected when the typical size of trades diminishes. For that purpose, we compute the intraday covariance matrix of returns for our pool of US stocks and we normalize each term $(C_k^{i,j})_{1\leq k \leq M}$ by its daily average, then we consider the median value of diagonal terms and off-diagonal terms as a way of characterizing  the evolution of a typical diagonal term and a typical off-diagonal term respectively. The impact of the relative size of orders on the intraday patterns is assessed by conditioning our estimations.

More exactly, we start by defining the matrix of \emph{trade imbalances} for each stock $n$ in order to be able to compare the relative size of trades. Namely, for any $n,k,l$, we set:
$$
w_{k,l}^n:= \frac{\nu_{k,l}^n}{\underset{1\leq l \leq N}{mean} \sum_{k} | \nu_{k,l}^n|},
$$
where  $mean_{n\in A}\{ x_n\}$ denotes the average of $(x_n)$ as $n$ varies in $A$.
This mean is an estimate of the expectation of the sum of the absolute values of $\nu_{k,l}^n$ over a day; it can be seen as a renormalizing constant, enabling us to mix  different stocks on Figure \ref{Figure4-Intrinsec}.

Next, we define the \emph{conditioned intraday covariance matrix} $\left(C_k^{i,j}(\la) \right)_{\substack{1\leq i,j \leq d \\ 1\leq k \leq M}}$ for every $\la \geq 0$ as follows:
\begin{equation}\label{conditioned-covariance-chap5}
C_{k}^{i,j}(\la) := \frac{1}{\# \mathcal E_k^{i,j}(\la)-1} \sum_{l\in \mathcal E_k^{i,j}(\la)} \left( \d S^{i,k,l} -  \overline{\d S}_\la^{i,j,k}   \right)  
\left( \d S^{j,k,l} -  \overline{\d S}_\la^{j,i,k}    \right),
\end{equation}
where: 
\begin{itemize}
\item the set $\mathcal E_k^{i,j}(\la) $ corresponds to a conditioning: it contains only days for which this 5 min bins (indexed by $k$) for this pair of stocks (indexed by $(i,j)$, note that we can have $i=j$) is such that the renormalized net volumes are (in absolute value) below $\la$. It is strictly defined as follows:
$$
\mathcal E_k^{i,j}(\la) := \left\{ 1\leq l \leq N \ : \   |w_{k,l}^i | \leq \la \ \mbox{ and } \ |w_{k,l}^j | \leq \la  \right\};
$$
\item $\d S^{n,k, l}$  is the price increment defined as for \eqref{covariance-standard-estimation} and is computed from the historic stock prices;  
\item $\overline{\d S}_\la^{i,j,k}$ is the average price increment over selected days, given by: $\overline{\d S}_\la^{i,j,k}= \left( \sum _{l\in \mathcal E_k^{i,j}(\la)} \d S^{i,k, l}\right)/ \left(\# \mathcal E_k^{i,j}(\la) \right)$;
\item $\# \mathcal E_k^{i,j}(\la) $ denotes the number of elements of $\mathcal E_k^{i,j}(\la)$: the number of selected days. Note that the stricter the conditioning (i.e. the smaller $\la$), the less days in the selection, and hence the smaller $\# \mathcal E_k^{i,j}(\la) $.
\end{itemize}
Here $\left(C_k^{i,j}(\la) \right)_{\substack{1\leq i,j \leq d \\ 1\leq k \leq M}}$ represents the intraday covariance matrix of returns conditioned on trade imbalances between $-\la$ and $\la$. In all our examples, the coefficient $\la$ is chosen to have enough days in the selection (for obvious statistical significance reasons), i.e. so that $\# \mathcal E_k^{i,j}(\la) \gg 1$ for any $1\leq i,j \leq d$ and $1\leq k \leq M$. 
\begin{figure}[thbp] 
\begin{center}
        \subfigure[The median diagonal pattern (squared volatility)]{%
            \label{figure1-empirical-study}
            \includegraphics[width=0.5\textwidth]{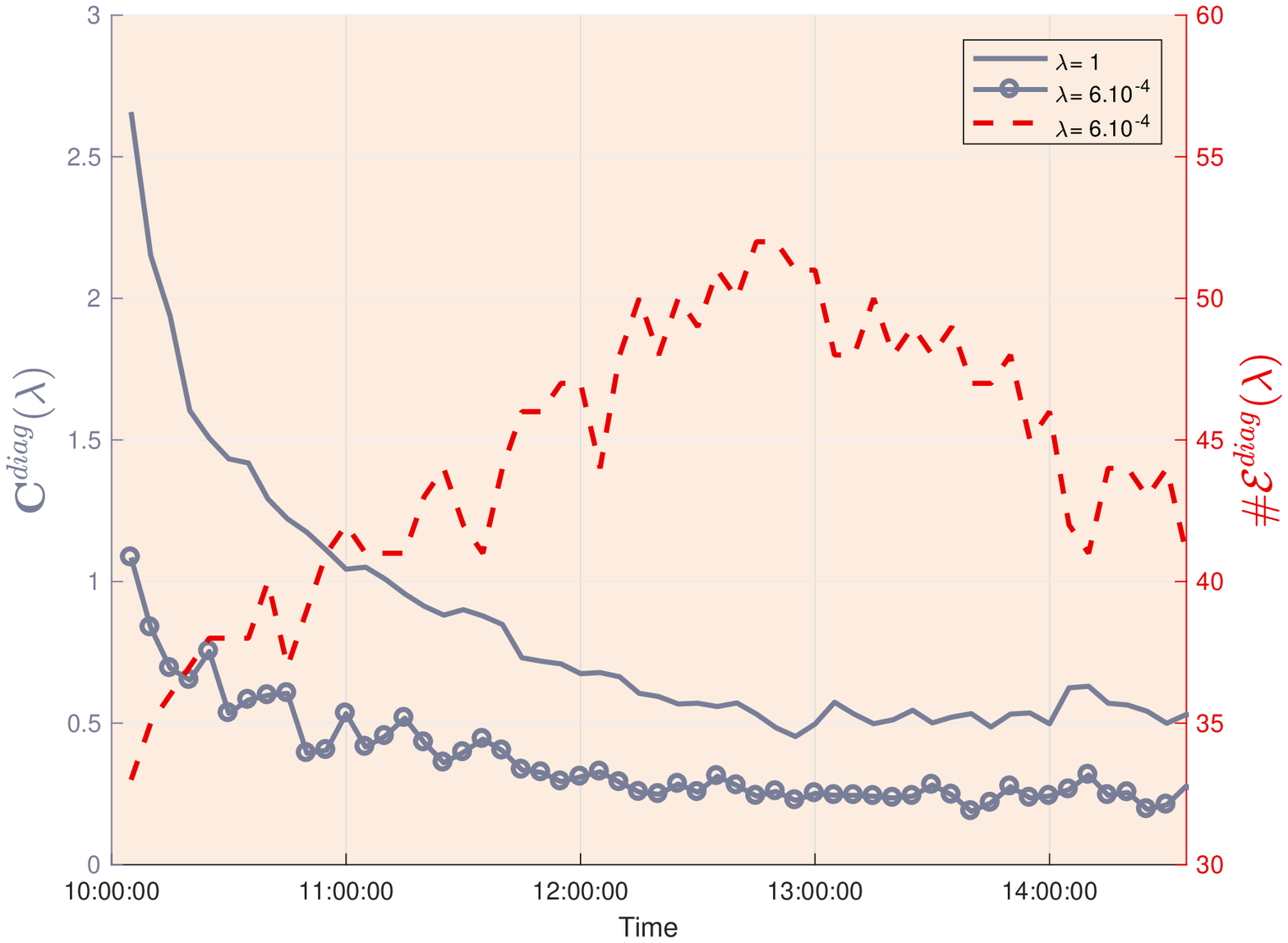}
        }%
        \subfigure[The median off-diagonal pattern]{%
           \label{figure2-empirical-study}
           \includegraphics[width=0.5\textwidth]{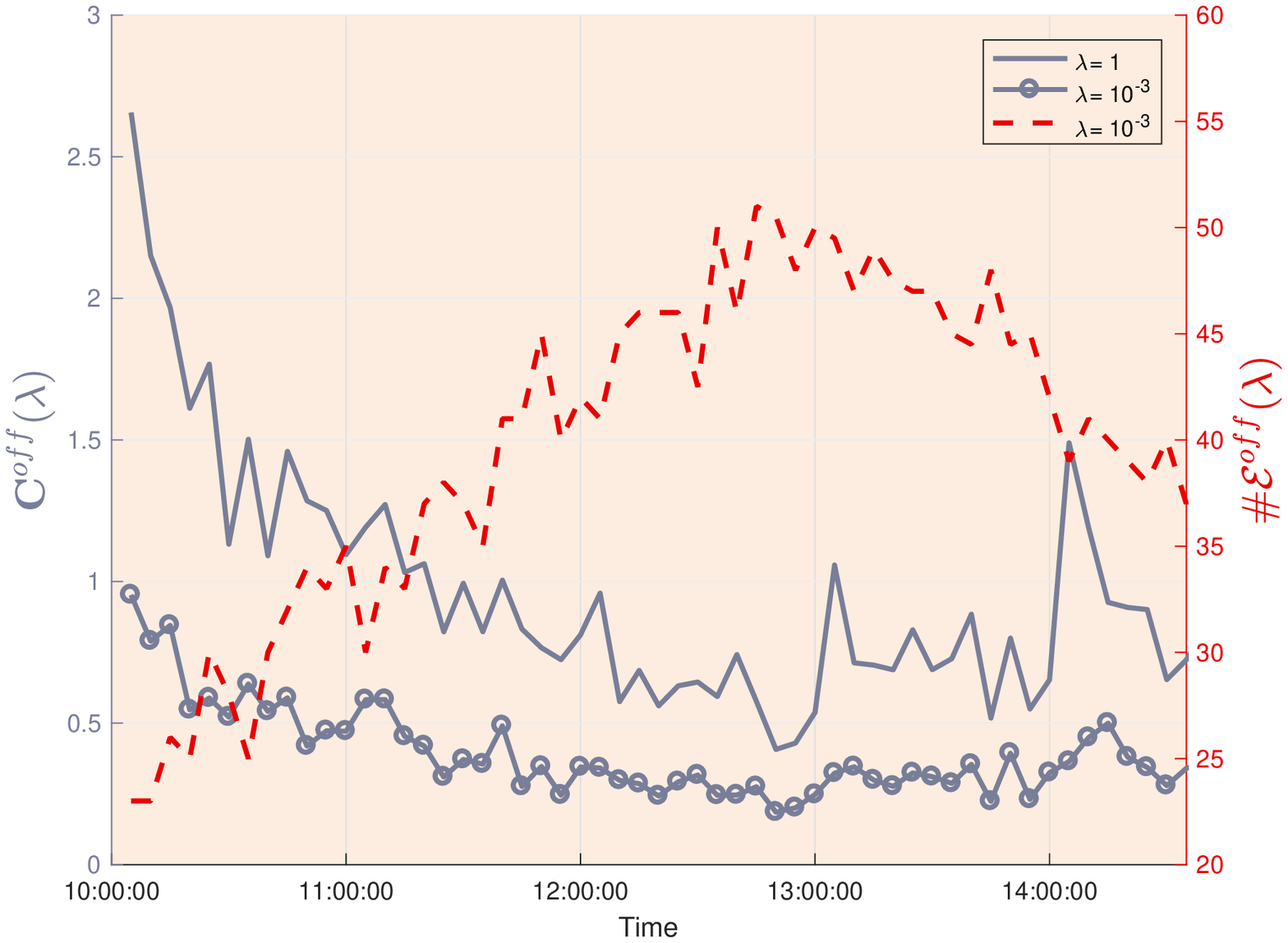} 
        }%
        \end{center}
        
          \caption{%
          Plots of the median diagonal pattern ${\bf C}^{diag}(\la)$ and the median off-diagonal pattern ${\bf C}^{off}(\la)$ for diverse values of $\la$.
          { The secondary axis corresponds to the number of observations for each 5 minutes bin after the conditioning.}}
          \label{Figure4-Intrinsec}
        \end{figure}

Now we define the median diagonal pattern ${\bf C}^{diag}(\la):=\left(C_k^{diag}(\la) \right)_{1\leq k \leq M}$ and the median off-diagonal pattern ${\bf C}^{off}(\la):=\left(C_k^{off}(\la) \right)_{1\leq k \leq M}$ as follows: 
$$
C_k^{diag}(\la) := \underset{1\leq i \leq d}{median}\left\{  C_k^{i,i}(\la) / \underset{1\leq k \leq M}{mean} \left\{ C_k^{i,i}(1) \right\}   \right\}
$$
and
$$
 C_k^{off}(\la) := \underset{1\leq i<j \leq d}{median}\left\{  C_k^{i,j}(\la) / \underset{1\leq k \leq M}{mean} \left\{ C_k^{i,j}(1) \right\}   \right\},
$$
for any $k=1,...,M$ and  $\la \geq 0$. Here the notation $median_{n\in A}\{x_n\}$  denotes the median  value of $(x_n)$ as $n$ varies in $A$. One should note that the choice of the normalization constant (i.e. the mean over bins of $C_k^{i,j}(1)$) will allow us to compare the different curves with respect to the reference case, i.e. without conditioning. In fact, it turns out that $\la=1$ removes all conditionings: $1$ is above the maximum value of our renormalized flows. {Moreover, we set  $\# {\bf {\mathcal{E}}}_k^{diag}(\la) := \underset{1\leq i \leq d}{median}\left\{  \#{\mathcal{E}}_k^{i,i}(\la)\right\}$ and  $\# {\bf {\mathcal{E}}}_k^{off}(\la) := \underset{1\leq i<j \leq d}{median}\left\{  \#{\mathcal{E}}_k^{i,j}(\la)\right\}$.}

We take medians instead of means to have robust estimates of the expectations. We do not want our estimates to be polluted by few days of potential erratic market data, that could for instance be due to trading interruptions.

Figures \ref{figure1-empirical-study} and \ref{figure2-empirical-study} displays representations of ${\bf C}^{diag}(\la)$ and ${\bf C}^{off}(\la)$ for various values of $\la$. Observe that  ${\bf C}^{diag}(1)$ and ${\bf C}^{off}(1)$ exhibits a pattern that is very similar to our simulation in Figures \ref{figure2-sensitivity-E_0} and \ref{figure4-sensitivity-beta}, especially between the beginning of the trading period and $13:00$. Indeed, the observed quantities are high at the beginning of the trading period, lower as the day progresses until it reaches a minimum around $13:00$, followed by a slight increase until market close. The general shape of these curves (left-slanted smile) is well-known (see e.g. \cite{book-Jaimungal} and references therein).  


Our core observation is that: \emph{given the absolute value of the net flows are small, this average curve {flattens out}, even at the beginning of the day}. At our knowledge, it is the first time that this conditioning is mentioned, and it is perfectly in line with our simulated  Figures \ref{figure2-sensitivity-E_0} and \ref{figure4-sensitivity-beta}. This suggests the slopes of the ``averaged volatility curves'' comes essentially from the days during which there is a large positive or negative imbalance of large orders, that are ``optimally'' executed. { We believe that this analysis  should be completed by using a larger data set.}

\subsubsection{A Toy Model Calibration}
Now, we use historical data to fit our MFG model to some examples of traded stocks. For that purpose, we use a very simple approach by reducing as much as possible the number of parameters: 
\begin{enumerate}[ label=($\mathcal{S}$\arabic*)]
\item\label{step1} We suppose that $E_0$ is a centered Gaussian random vector with a covariance matrix $\Gamma$. Moreover, as suggested by \eqref{E-derivative-mu} , we use ${\rm Corr}\left( \sum_{k} \nu_{k}^i, \sum_{k} \nu_{k}^j \right)$ as a proxy for ${\rm Corr}\left( E_0^i, E_0^j \right)$, and which is in turn estimated from data by using the standard estimator :
$$
\frac{1}{N-1} \sum_{l=1}^N \left(\sum_{k} \nu_{k,l}^{i}- \overline{\sum_{k} \nu_{k,l}^{i}} \right) \left(\sum_{k} \nu_{k,l}^{j} - \overline{\sum_{k} \nu_{k,l}^{j}} \right),
$$
where $\overline{\sum_{k} \nu_{k,l}^{i}} = N^{-1} \sum_{l}\sum_{k} \nu_{k,l}^{i}$.

\item\label{step2} As suggested by Figures \ref{figure1-empirical-study}-\ref{figure2-empirical-study} we choose
$$
\d t \Sigma_{i,j} = 0.2 \times \underset{1\leq k \leq M}{mean} \left\{ C_k^{i,j}(1) \right\},
$$
and we shift upward the simulated curves by $\d = 0.3 \times \underset{1\leq k \leq M}{mean} \left\{ C_k^{i,j}(1) \right\}$;
\item Finally, we fix the penalization parameters $A_i=A=10$, and choose $k_i := V_i/\eta_i$, $\gamma$, and $\Gamma_{i,i}$ by minimizing the $L^2$-error between the simulated curves and real curves.
\end{enumerate}

\begin{figure}[thbp] 
\begin{center}
        \subfigure[GOOG]{%
            \label{figure1-GOOG-fit}
            \includegraphics[width=0.5\textwidth]{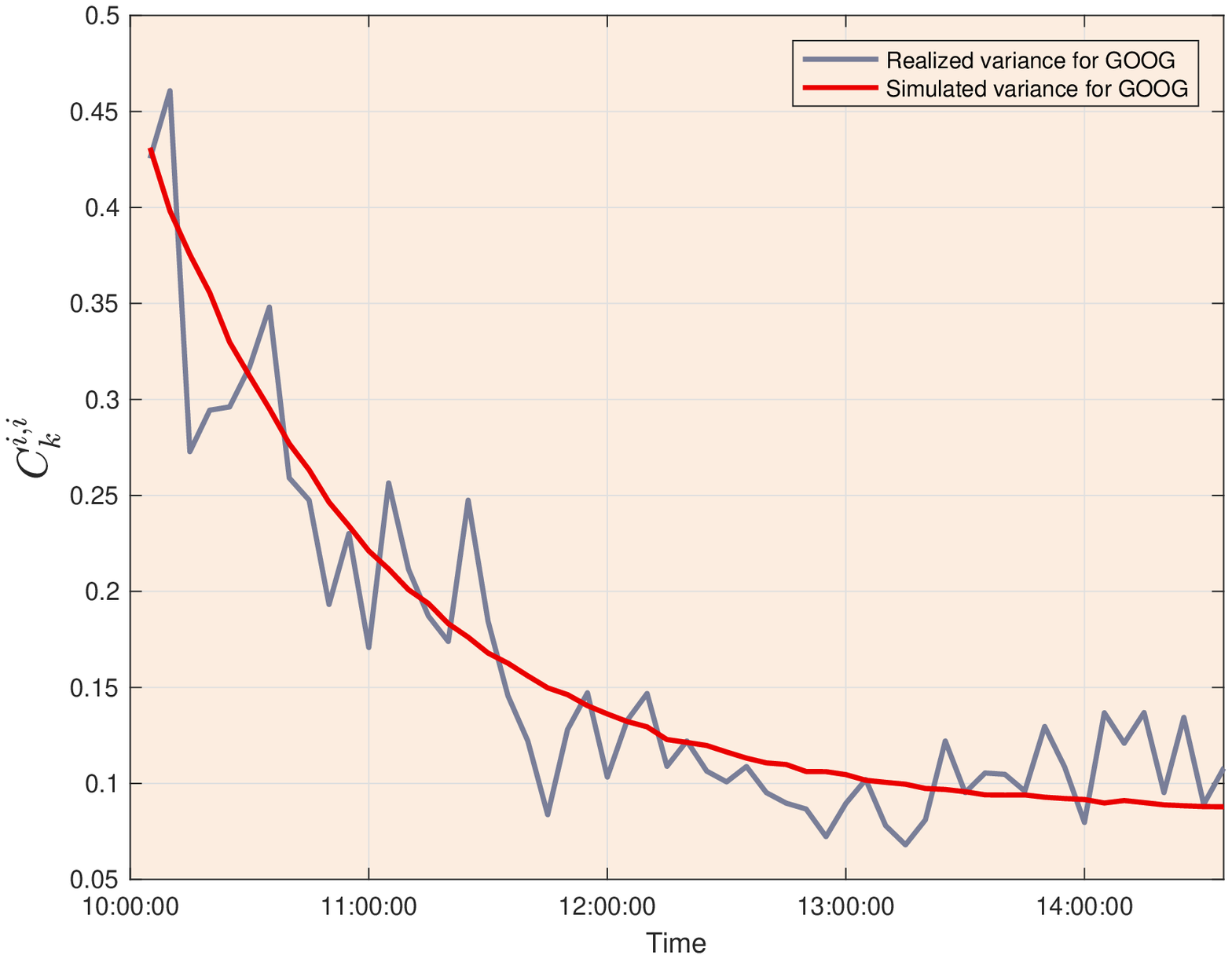}
        }%
        \subfigure[GOOG/AAPL]{%
           \label{figure2-GOOG/AAPL-fit}
           \includegraphics[width=0.5\textwidth]{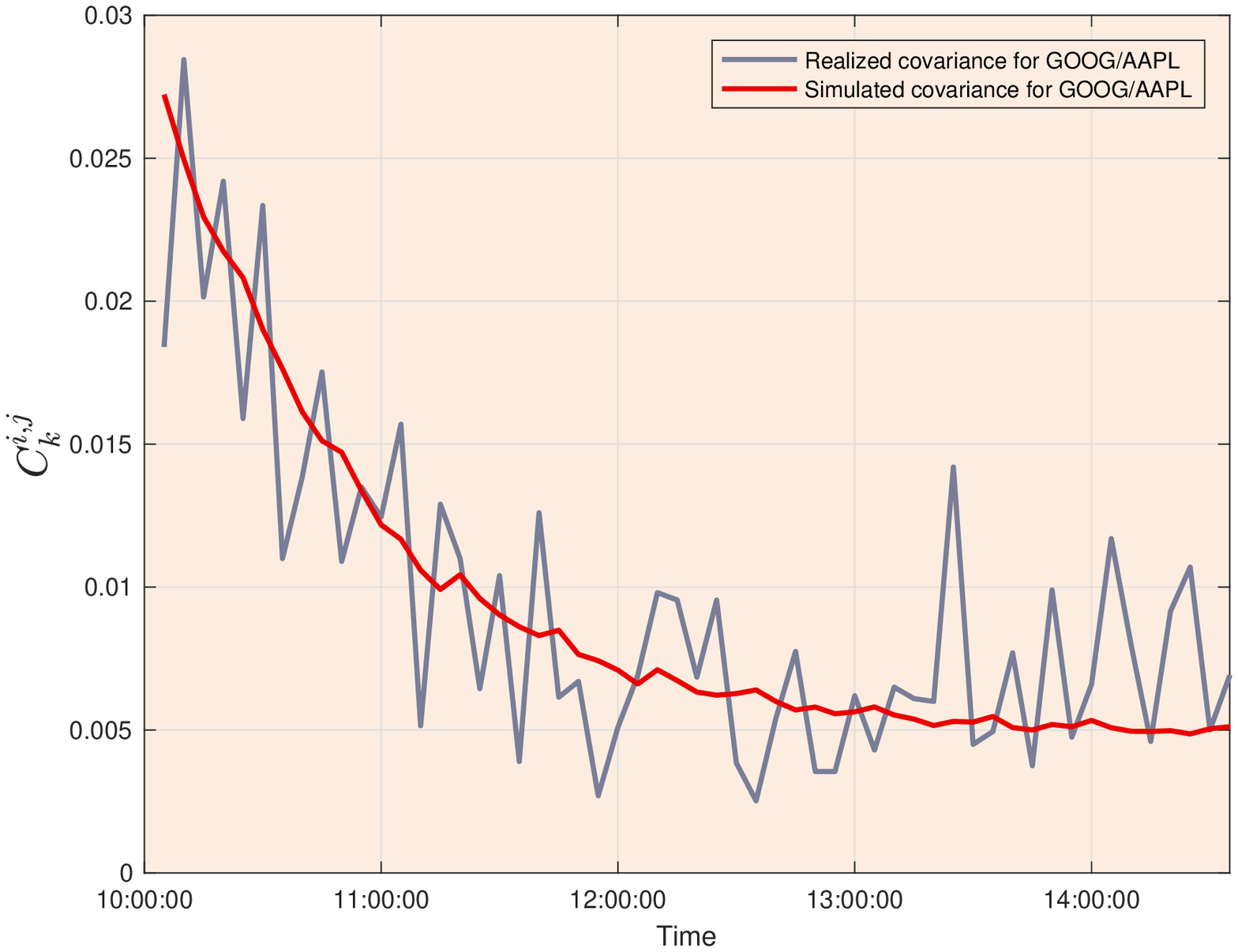} 
        }\\ 
   
        \end{center}
          \caption{%
         Comparison between the simulated curves and the real curves for two examples. Figure \ref{figure1-GOOG-fit}  corresponds to $i \equiv j \equiv GOOG$ and Figure \ref{figure2-GOOG/AAPL-fit} corresponds to $(i,j)\equiv (GOOG, AAPL)$.}
        \end{figure}

\begin{table}[thbp] 
\begin{center}
\setlength{\tabcolsep}{0.5em} 
{\renewcommand{\arraystretch}{1.5}%
  \begin{tabular}{l l l l l}\hline
    \multicolumn{3}{l}{\bf Estimated using the regression~\eqref{market_impact_estimation_model} of Section~\ref{section33-Empirical} and \ref{step1}-\ref{step2}}\\
    ${\rm Corr}\left( E_0^1, E_0^2 \right)= 20 \%$ , & $\a_1=2.5\times 10^{-4}$, & $\sigma_1 = 1.55 $,   \\
    $\rho_{1,2}=0.5 \%$, & $\a_2=7.9\times 10^{-5} $, & $\sigma_2=0.43 $,    \\\hline
    \multicolumn{3}{l}{\bf Calibrated on curves of Figure \ref{figure1-GOOG-fit} and \ref{figure2-GOOG/AAPL-fit}}\\
    $\Gamma_{1,1}= 3.6 \times 10^{9} $, &$\Gamma_{2,2}=2.02\times 10^{9}$, & $\g=10^{-3} $, \\
                       $k_1= 2\times 10^{7} $,   & $k_2= 8\times 10^{8}$. &  \\[0.3cm]
   \end{tabular}
   }
   \end{center}
   \vspace{0.2cm}
   \caption{%
        The MFG model parameters for the two-stocks portfolio:  Asset 1 $\equiv GOOG$; Asset 2 $\equiv AAPL$.}
        \label{table2-calibration}
\end{table}

Figures \ref{figure1-GOOG-fit}-\ref{figure2-GOOG/AAPL-fit} show illustrative examples by considering the two-stocks portfolio:  Asset 1 $\equiv GOOG$; Asset 2 $\equiv AAPL$. For that example, {the parameters of our model} are presented in Table \ref{table2-calibration}.

{Here $\Gamma_{1,2}$, $\a_1$, $\a_2$, $\sigma_1$, $\sigma_2$, $\rho_{1,2}$ are estimated from data (cf. Table \ref{table1-numbers} and Figures \ref{figure1-empirical-study}-\ref{figure2-empirical-study}), while $\Gamma_{1,1}$, $\Gamma_{2,2}$, $\g$, $k_1$, $k_2$ are computed by minimizing the $L^2$-error between the simulated curves and real curves. Following this approach,  one requires $2d+1$ parameters to fit a portfolio of $d$ stocks (i.e. $d(d+1)/2$ curves). }

\newpage
\bibliographystyle{siam}
\bibliography{C:/mybib/mybib}
\end{document}